\newtheorem{remark}{Remark}[section]
\newtheorem*{theorem*}{Theorem}
\newtheorem{theorem}{Theorem}[section]
\newtheorem{lemma}[theorem]{Lemma}	
\newtheorem{definition}[theorem]{Definition}
\newtheorem{proposition}[theorem]{Proposition}
\numberwithin{equation}{section}
\definecolor{codegreen}{rgb}{0,0.6,0}
\definecolor{codegray}{rgb}{0.5,0.5,0.5}
\definecolor{codepurple}{rgb}{0.58,0,0.82}
\lstdefinestyle{mystyle}{
    backgroundcolor=\color{white},   
    commentstyle=\color{codegreen},
    keywordstyle=\color{magenta},
    numberstyle=\tiny\color{codegray},
    stringstyle=\color{codepurple},
    basicstyle=\ttfamily\footnotesize,
    breakatwhitespace=false,         
    breaklines=true,                 
    captionpos=b,                    
    keepspaces=true,                 
    numbers=left,                    
    numbersep=5pt,                  
    showspaces=false,                
    showstringspaces=false,
    showtabs=false,                  
    tabsize=2,
    upquote=true,
    columns=fullflexible
}
\author[E. Poimenidou]{Eirini Poimenidou}
\address{E. Poimenidou, School of Informatics, Aristotle University of Thessaloniki, Greece}
\email{epoimeni@csd.auth.gr}
\author[K. A. Draziotis]{K. A. Draziotis}
\address{K. A. Draziotis, School of Informatics, Aristotle University of Thessaloniki, Greece}
\email{drazioti@csd.auth.gr}
\begin{document}
\title{Message Recovery Attack in NTRU via Knapsack}
\keywords{Public Key Cryptography; NTRU Cryptosystem; Lattices; Lattice Reduction; Modular Knapsack Problem; Shortest Vector Problem;}
\subjclass[2020]{94A60}
\maketitle
\begin{abstract}
  In the present paper, we introduce a message-recovery attack based on the Modular Knapsack Problem, applicable to all variants of the NTRU-HPS cryptosystem. 
Assuming that a fraction $\epsilon$ of the coefficients of the message ${\bf{m}}\in\{-1,0,1\}^N$ and of the nonce vector ${\bf r}\in\{-1,0,1\}^N$ are known in advance at random positions, 
we reduce message decryption to finding a short vector in a lattice that encodes an instance of a  modular knapsack system. 
This allows us to address a key question: how much information about ${\bf m}$, or about the pair $({\bf m},{\bf r})$, is required before recovery becomes feasible? 
A FLATTER reduction successfully recovers the message, in practice when $\epsilon\approx 0.45$. 
Our implementation finds ${\bf m}$ within a few minutes on a commodity desktop.
 \end{abstract}

\section{Introduction}
In 1996, Hoffstein, Pipher, and Silverman developed the NTRU cryptosystem, aiming to create robust encryption and signature systems, as detailed in \cite{hoffstein}.
 Its security is based on the difficulty of solving a system of linear equations over polynomial rings, a problem that is expected to remain hard even with quantum computers. NTRU cryptosystem has withstood over 25 years of cryptanalysis, and variants of it NTRU have been shown to be closely related to the Ring Learning With Errors (R-LWE) problem, whose hardness is supported by worst-case reductions on ideal lattices. NTRU is known for its exceptional performance and moderate key-size, making it a popular choice for embedded cryptography. It has been standardized by IEEE, X9.98, and PQCRYPTO, and was a finalist in the NIST post-quantum cryptography standardization effort. 

In the present work, we outline a message recovery attack on NTRU based on the Shortest Vector Problem (SVP). Our method assumes partial knowledge of the message ${\bf m}\in\{-1,0,1\}^N$ and/or of the nonce vector ${\bf r}\in\{-1,0,1\}^N$. Such leakage assumptions are standard in the literature: for example, several works on DSA \cite{dsa2,marios,Howgrave} recover secret keys by exploiting partial information about ephemeral keys.  In \cite{may-hints}, the authors use "hints" from an oracle to recover the secret key in Kyber. Also, Coppersmith type attacks assume some knowledge of the one prime in order to compute the remaining part. 

\subsection{Previous Work}
The NTRU cryptosystem was first subjected to a lattice-based attack in 1997 by Coppersmith and Shamir \cite{CopSha97}.  
Later, Gama and Nguyen \cite{Gama-Nguyen} exploited decryption failures to recover the secret key, under the assumption of access to a decryption oracle. Subsequently, in 2001, Gentry \cite{gentry} proposed an attack that is particularly effective when the parameter $N$ is composite.

A number of other approaches have been developed to attack NTRU. One notable line of work reformulates the NTRU problem as a system of multivariate quadratic equations over the binary field, utilizing Witt vectors \cite{gerald,witt}. Odlyzko \cite{Odlyzko} proposed a meet-in-the-middle strategy that partitions the search space into two halves, reducing time complexity at the cost of substantial memory usage. Building on this idea, Howgrave-Graham introduced a hybrid attack \cite{hybrid attack} that combines lattice reduction with a meet-in-the-middle approach. This hybrid technique has since become a standard method for assessing the security of lattice-based encryption schemes.

For NTRU variants operating under larger moduli than those used in the NTRUEncrypt standard, Albrecht, Bai, and Ducas \cite{Albrecht}, as well as Cheon, Jeong, and Lee \cite{Cheon}, independently extended and refined these hybrid and meet-in-the-middle techniques to remain effective in more demanding parameter regimes. Nguyen \cite{Nguyen - Boosting the hybrid} later enhanced these methods by clarifying their structure and further improving their efficiency. While the subfield attack proposed in the previous work surpasses several earlier strategies, it still falls short of the most advanced hybrid attacks in terms of performance.

More recently, in 2023, May and Nowakowski \cite{may_recent} introduced a powerful new attack on the latest NTRU encryption scheme. Their approach employs a carefully designed lattice, leveraging the BKZ algorithm in conjunction with the sieving methods from the G6K library. Transitioning away from the traditional Coppersmith-Shamir lattice, they construct a lattice based on the cyclotomic ring, achieving significant performance gains. 

Finally, in 2025, two message recovery attacks \cite{adam_draz,nutmic} based on Babai nearest plane algorithm are presented. The authors of \cite{ntru_vfk_attack} also implement a message recovery attack, after reducing the NTRU-lattice to a Voronoi First Kind (VFK) lattice and then use a polynomial exact CVP-algorithm to recover the message. 

\subsection{Our contribution}
 In this paper we aim to {recover the unknown message}. 
 We consider an adversary who observes ciphertexts of the NTRU-HPS encryption scheme and who is assumed to know $k$ coefficients of the ternary plaintext polynomial $m(x)\in\{-1,0,1\}^N$ or $k_1$ coefficients of $m(x)$ and $k_2$ coefficients of the unknown nonce $r(x)\in\{-1,0,1\}^N$. More precisely, knowledge of approximately $45\%$ of the coefficients of the pair of unknown polynomials $(m(x), r(x))$, suffices to reconstruct $r(x)$ and, consequently, recover the plaintext $m(x)$. Such partial information may be leaked via a side-channel, arise from protocol redundancy, or be exposed by format markers.

Our method differs from previous approaches~\cite{adam_draz,ntru_vfk_attack,nutmic}, 
as it relies on the \emph{Shortest Vector Problem (SVP)} rather than the 
\emph{Closest Vector Problem (CVP)}. The SVP is better understood, both 
theoretically and algorithmically, providing a more solid foundation for our analysis.
Our assumptions in the present work are clearer, as it involves only the message polynomial \(m(x)\) and the nonce $r(x)$. Moreover, the underlying problem we attack is different. 

In our attack we solve a modular knapsack with coefficients in \(\{-1,0,1\}\) (instead of \(\{0,1\}\)). In more details our method reduces to the following problem: 
Let a matrix $A \in \mathbb{Z}_q^{n \times m}$ and a vector $B \in \mathbb{Z}_q^{n \times 1}$, we seek a vector $X$ with entries in $\{-1,0,1\},$ 
$
\text{such that}\ A\,X = B\ \text{in}\  \mathbb{Z}_q.
$
We call this \textit{modular knapsack problem}, also known as  {\it{Inhomogeneous Short Integer Solution problem}} ($ISIS_q$ \cite{ISIS23}). Furthermore, the present paper contributes to the study of the previous problem both in theory and in practice. 

Finally, our attack can also be applied to the NTRU-HRSS scheme, where only its parameters would need to change. For NTRU-Prime our attack could also be effective, however more research is required since it employs a non cyclic linear system which is different from the system in our case.

\subsection{Plausibility of Partial-Plaintext Knowledge}
In a Key Encapsulation Mechanism (KEM) the message ${\bf m}$ is ideally sampled uniformly at random, however if an implementation instead derives ${\bf m}$ (or the Key Derivation Function (KDF) preimage) from a structured seed, large portions of its content may be predictable. Additionally, if a weak pseudorandom number generator (PRNG) is used, then an adversary who collects sufficiently many $(\mathbf{m}_i, \mathbf{c}_i)$ pairs can detect statistical biases in certain coefficients of ${m}_i$ and exploit them to recover partial information.

In~\cite{sidechannel} the authors present a side-channel attack that exploits leakage from \emph{schoolbook} (product-scanning) polynomial multiplication when one operand is small with coefficients in $\{-1,0,1\}$.

 In~\cite{sidechannel2} they propose a power-based side-channel attack targeting the random generation of polynomials in NTRU. By combining a chosen-plaintext message attack with the collection of numerous $(\bf{m},\bf{c})$ pairs, one can detect statistical biases in the distribution of the nonce $r(x)$. 
\subsection{Roadmap} In Section \ref{sec:Background} we provide information on lattices, on the NTRU cryptosystem, we introduce   modular knapsack problem and FLATTER reduction algorithm. Next, in Section \ref{sec:system} we provide some preliminaries about our attack, such as the construction of a system. In Section \ref{section:attack} we describe our attack, the experiments we conducted and the results we yielded. Finally, in Section  \ref{sec:conclusion} we provide a conclusion. Our work's corresponding implementation can be found on Github\footnote{\url{https://github.com/poimenidou/knapsack-message-recovery-attack}}.

\section{Background}\label{sec:Background}
\subsection{Lattices}

In this section we recall some well-known facts about lattices. Let ${{\bf{b}}_1,{\bf{b}}_2,\ldots,{\bf{b}}_n}$ be linearly independent vectors of ${\mathbb{R}}^{m}$.
	The set 
	\[\mathcal{L} = \bigg{\{} \sum_{j=1}^{n}\alpha_j{\bf{b}}_j :
	\alpha_j\in\mathbb{Z}, 1\leq j\leq n\bigg{\}}\]
	is called  a {\em lattice} and 
	the finite vector set $\mathcal{B} = \{{\bf{b}}_1,\ldots,{\bf{b}}_n\}$ is called a basis of 
	the lattice $\mathcal{L}$. 
	All the bases of $\mathcal{L}$ have the same number of elements, i.e. in our case $n,$ which is called
	{\em dimension} or {\em rank} of $\mathcal{L}$. If $n=m$, then	the lattice $\mathcal{L}$ is said to have {\em full rank}. 
	Let $M$ be the $n\times m$ matrix, having as rows the vectors 
	${\bf{b}}_1,\ldots,{\bf{b}}_n$. 
	If $\mathcal{L}$ has full rank, then the {\em volume} of the lattice
	$\mathcal{L}$ is defined to be the positive number
	$|\det{M}|.$  The volume, as well as the rank, are independent of the basis $\mathcal{B}$. It is denoted 
	by $vol(\mathcal{L})$ or $\det{\mathcal{L}}.$ 
	Let now ${\bf v}\in \mathbb{R}^m$, then $\|{\bf v}\|$ denotes the Euclidean norm of ${\bf v}$.  Additionally,  we denote by  $\lambda_1(\mathcal{L})$ the least of the lengths of vectors of 	$ \mathcal{L}-\{ {\bf 0} \}$. Finally, if ${\bf t}\in {\rm{span}}({\bf b}_1,...,{\bf b}_n)$, then by $dist(\mathcal{L},{\bf t}),$ we denote $\min\{\|{\bf v}-{\bf t}\|: {\bf v}\in \mathcal{L} \}$.

There are two main fundamental problems on lattices the Shortest Vector Problem (SVP) and the Closest Vector Problem (CVP).

\textbf{The Shortest Vector Problem} (SVP): Given a lattice $\mathcal{L}$ find a non zero vector $\bf{b} \in \mathcal{L}$ that minimizes the (Euclidean) norm $\|\bf b\|$.

\textbf{The Closest Vector Problem} (CVP): Given a lattice $\mathcal{L}$ and a vector   ${\bf t} \in {\rm{span}}({\bf b}_1,\dots,{\bf b}_n)$ that is not in $\mathcal{L}$, find a  vector $\bf{b} \in \mathcal{L}$ that minimizes the distance $\|{\bf b}-{\bf t}\|$.

\textbf{The approximate Shortest Vector Problem} (appr${\rm SVP}$): Given a lattice $\mathcal{L}$ and a function $f(n)$, find a non-zero vector ${\bf b}\in \mathcal{L},$ such that:\[ \|{\bf b}\|\leq f(n) \lambda_1(\mathcal{L}).\] Each choice of the function  $f(n)$ gives a different approximation of the Shortest Vector Problem. 

\textbf{The approximate Closest Vector Problem} (appr${\rm CVP}$): Given a lattice $\mathcal{L},$ a vector  ${\bf t} \in {\rm{span}}({\bf b}_1,\dots,{\bf b}_n)$  and a function $f(n)$, find a vector ${\bf b}\in \mathcal{L}$ such that, \[ \|{\bf b}-{\bf t}\| \leq f(n) dist(\mathcal{L},{\bf t}).\] 

\subsection{Lattice Basis Reduction}
The well known LLL algorithm \cite{lll}, solves SVP rather well in small dimensions but performs poorly in large dimensions. The inability of LLL and other lattice reduction algorithms to effectively solve apprSVP and apprCVP determines the security of lattice-based cryptosystems. We provide the definition of LLL reduced basis of a lattice ${\mathcal{L}}.$
\begin{definition}
A basis $\mathcal{B} = \{{\bf{b}}_1,\ldots,{\bf{b}}_n\}$ of a lattice $\mathcal{L}$ is called LLL-reduced if it satisfies the following conditions:
\\
\texttt{1.} $|\mu_{i,j}| = \frac{|{\bf b}_i \cdot {\bf b}^*_j|}{\|{\bf b}^*_j\|^2} \le \frac{1}{2} $
for every $i,j$ with $1\leq j < i \leq n$,
\\
\texttt{2.} $\|{\bf b}^*_i\|^2  \geq (\frac{3}{4}- \mu_{i,i-1}^2)\|{\bf b}^*_{i-1}\|^2$ for every $i$   with $1 < i \leq n$. 
\end{definition}

\begin{proposition}
Let $\mathcal{L}$ be a lattice of rank $n$. For every  LLL-reduced  basis $\mathcal{B} = \{{\bf{b}}_1,\ldots,{\bf{b}}_n\}$ of a lattice $\mathcal{L}$ we get,
$$\|{\bf b}_1\| \leq 2^{(n-1)/2}\lambda_1(\mathcal{L}).$$ 
Thus, an LLL-reduced basis solves the approximate SVP to within a factor of $2^{(n-1)/2}$.
\end{proposition}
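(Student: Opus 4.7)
The plan is to exploit the two LLL conditions to obtain a geometrically decreasing lower bound on the Gram--Schmidt norms $\|{\bf b}^*_i\|$, and then to compare $\|{\bf b}_1\|$ to the norm of an arbitrary nonzero lattice vector via its expansion in the Gram--Schmidt basis.

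First I would combine condition 1 with condition 2. Since $|\mu_{i,i-1}|\le 1/2$, the Lov\'asz inequality yields
\[
\|{\bf b}^*_i\|^2 \;\ge\; \Bigl(\tfrac{3}{4}-\tfrac{1}{4}\Bigr)\|{\bf b}^*_{i-1}\|^2 \;=\; \tfrac{1}{2}\|{\bf b}^*_{i-1}\|^2,
\]
and iterating gives $\|{\bf b}^*_1\|^2 \le 2^{i-1}\|{\bf b}^*_i\|^2$ for every $i\in\{1,\dots,n\}$. Using ${\bf b}_1={\bf b}^*_1$, this already controls $\|{\bf b}_1\|$ in terms of every later Gram--Schmidt vector.

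Next I would take an arbitrary nonzero ${\bf v}\in\mathcal{L}$ and write ${\bf v}=\sum_{i=1}^n \alpha_i {\bf b}_i$ with integer $\alpha_i$. Letting $k$ be the largest index with $\alpha_k\neq 0$ and projecting ${\bf v}$ onto the direction ${\bf b}^*_k$, the Gram--Schmidt expansion of ${\bf b}_1,\dots,{\bf b}_k$ shows that this projection equals $\alpha_k\,{\bf b}^*_k$. Since $|\alpha_k|\ge 1$, I get $\|{\bf v}\|\ge \|{\bf b}^*_k\|$. Combining with the previous step yields
\[
\|{\bf b}_1\|^2 \;\le\; 2^{k-1}\|{\bf b}^*_k\|^2 \;\le\; 2^{k-1}\|{\bf v}\|^2 \;\le\; 2^{n-1}\|{\bf v}\|^2.
\]
Choosing ${\bf v}$ to be a shortest nonzero vector of $\mathcal{L}$ delivers $\|{\bf b}_1\|\le 2^{(n-1)/2}\lambda_1(\mathcal{L})$, and the final sentence of the statement is immediate.

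The only delicate point is the identification of the projection of ${\bf v}$ on ${\bf b}^*_k$ with $\alpha_k\,{\bf b}^*_k$: one must observe that ${\bf b}_i\in\operatorname{span}({\bf b}^*_1,\dots,{\bf b}^*_i)$, so for $i<k$ the coefficients $\alpha_i{\bf b}_i$ contribute nothing to the ${\bf b}^*_k$-component, while for $i=k$ the coefficient is exactly $\alpha_k$ by definition of Gram--Schmidt. Everything else is a two-line iteration of the size-reduced Lov\'asz bound.
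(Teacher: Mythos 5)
Your proof is correct and is exactly the classical argument: the size-reduced Lov\'asz condition gives $\|{\bf b}^*_1\|^2\le 2^{i-1}\|{\bf b}^*_i\|^2$, and the projection of a nonzero lattice vector onto its last relevant Gram--Schmidt direction gives $\|{\bf v}\|\ge\|{\bf b}^*_k\|$. The paper does not prove the proposition itself but defers to \cite[Proposition 1.11]{lll}, whose proof is precisely this argument, so there is nothing to add.
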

For details on the algorithm you can refer to \cite[Proposition 1.11]{lll}. Finally, we need the following Lemma.
\begin{lemma}\label{Lemma:lll-bound}
Let ${\bf b}_1,{\bf b}_2,\dots,{\bf b}_n$ be an LLL-reduced basis of the lattice ${\mathcal{L}}\subseteq\mathbb{R}^m$, and let $\{{\bf x}_1,{\bf x}_2,\dots,{\bf x}_t\}\subseteq {\mathcal{L}}$ be linearly independent vectors in ${\mathbb{R}}^m$.  Then, for all $1\le j\le t$, we have:
\[
  \|{\bf b}_j\|^2 \;\le\; 2^{\,n-1}\;\max\bigl\{\|{\bf x}_1\|^2,\|{\bf x}_2\|^2,\dots,\|{\bf x}_t\|^2\bigr\}.
\]
\end{lemma}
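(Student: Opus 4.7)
The plan is to combine two classical numerical consequences of the LLL conditions with a pigeonhole argument that exploits the linear independence of ${\bf x}_1,\dots,{\bf x}_t$.

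First, I would extract from the two reduction conditions (stated in the definition above) the following two facts about the Gram--Schmidt vectors ${\bf b}_i^*$. From condition 1 we get $\mu_{i,i-1}^2 \le 1/4$, and substituting into condition 2 yields $\|{\bf b}_i^*\|^2 \ge \tfrac{1}{2}\|{\bf b}_{i-1}^*\|^2$; iterating gives
\[
\|{\bf b}_j^*\|^2 \;\le\; 2^{\,i-j}\,\|{\bf b}_i^*\|^2 \qquad \text{for all } i \ge j.
\]
Then, writing ${\bf b}_j = {\bf b}_j^* + \sum_{i<j}\mu_{j,i}{\bf b}_i^*$ with $|\mu_{j,i}|\le 1/2$ and using the first inequality to bound the $\|{\bf b}_i^*\|^2$ in terms of $\|{\bf b}_j^*\|^2$, a short computation gives the standard bound $\|{\bf b}_j\|^2 \le 2^{\,j-1}\,\|{\bf b}_j^*\|^2$.

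Next, for each ${\bf x}_k\in \mathcal{L}$, I would write ${\bf x}_k = \sum_{i=1}^{n} a_{k,i}{\bf b}_i$ with $a_{k,i}\in\mathbb{Z}$, set $n_k = \max\{i : a_{k,i}\ne 0\}$, and re-express in the Gram--Schmidt basis. Because the change-of-basis from ${\bf b}_1,\dots,{\bf b}_n$ to ${\bf b}_1^*,\dots,{\bf b}_n^*$ is upper-triangular with $1$'s on the diagonal, the coefficient of ${\bf b}_{n_k}^*$ in the expansion of ${\bf x}_k$ is exactly $a_{k,n_k}\in\mathbb{Z}\setminus\{0\}$. By orthogonality of the ${\bf b}_i^*$,
\[
\|{\bf x}_k\|^2 \;\ge\; a_{k,n_k}^{\,2}\,\|{\bf b}_{n_k}^*\|^2 \;\ge\; \|{\bf b}_{n_k}^*\|^2.
\]

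Now I would reorder ${\bf x}_1,\dots,{\bf x}_t$ so that $n_1\le n_2\le\cdots\le n_t$. The linear independence forces $n_k\ge k$ for every $k$: otherwise ${\bf x}_1,\dots,{\bf x}_k$ would all lie in the $(k-1)$-dimensional space ${\rm span}({\bf b}_1,\dots,{\bf b}_{k-1})$, a contradiction. Chaining the three inequalities, for $1\le j\le t$,
\[
\|{\bf b}_j\|^2 \;\le\; 2^{\,j-1}\,\|{\bf b}_j^*\|^2 \;\le\; 2^{\,j-1}\cdot 2^{\,n_j - j}\,\|{\bf b}_{n_j}^*\|^2 \;\le\; 2^{\,n_j - 1}\,\|{\bf x}_j\|^2 \;\le\; 2^{\,n-1}\,\max_{1\le k\le t}\|{\bf x}_k\|^2,
\]
since $n_j\le n$. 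Because the right-hand side is invariant under the reordering, this gives the claimed bound for the original vectors as well.

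The algebraic manipulations in the first two steps are routine given the LLL definition; the only genuine idea is the reordering-plus-independence argument that supplies $n_j\ge j$, which is the single step I expect to require the most care to phrase cleanly.
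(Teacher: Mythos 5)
Your proof is correct and is essentially the classical argument behind \cite[Proposition 1.12]{lll}, which is exactly what the paper invokes in lieu of a proof: bound $\|{\bf b}_j\|^2$ by $2^{j-1}\|{\bf b}_j^*\|^2$, bound each $\|{\bf x}_k\|^2$ below by $\|{\bf b}_{n_k}^*\|^2$ via the highest nonzero integer coordinate, and use linear independence (after reordering) to get $n_j\ge j$ before chaining the Gram--Schmidt decay inequality. No gaps; the reordering step and the observation that the maximum is reorder-invariant are handled properly.
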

For a proof see \cite[Proposition 1.12]{lll}.

\subsubsection{FLATTER Reduction}
FLATTER is a fast lattice reduction algorithm for integer lattice bases created by Keegan Ryan and Nadia Heninger \cite{flatter} in 2023. It enhances the classical LLL-style reduction through a recursive QR decomposition combined with precision compression at each recursion level. The algorithm provides guarantees similar to traditional LLL but with significantly better performance in practice. 

The authors define a new notion of reduced-ness, called $\alpha$-lattice-reduced, based on a metric they call the drop of a lattice basis. They define the drop of $B$ as the total amount of downward steps in the lattice profile $\ell_i$ with $\ell_i = \log{\|{\bf b}_i^*\|},$ where ($\mathbf{b}_i^*$ is the $i$-th Gram-Schmidt vector) for $i\in\{1,\dots,n$\}:
$$drop(B)=vol\Biggl( \bigcup_{\substack{1 \leq i \leq n-1, \ell_{i+1} < \ell_i}}[\ell_{i+1}, \ell_i] \Biggr) = \sum_{i:\ell_{i+1}<\ell_i}(\ell_i-\ell_{i+1}).$$

A basis $B$ of rank $n$ is $\alpha$-lattice-reduced if it is \textit{size-reduced}, meaning that the upper-triangular Gram-Schmidt matrix has bounded coefficients, and if the drop of $B$ is smaller or equal to $\alpha n$, i.e. $drop(B) \leq \alpha n$. According to Theorem 2 of \cite{flatter}, if $B$ is $\alpha$-lattice-reduced, then it satisfies analogous bounds to LLL:
\begin{align*}
\|\mathbf{b}_1\| &\le 2^{\alpha n} (\det B)^{1/n} \\
\|\mathbf{b}_n^*\| &\ge 2^{-\alpha n} (\det B)^{1/n} \\
\|\mathbf{b}_i\| &\le 2^{\alpha n + O(n)}\, \lambda_i(B), \quad
    \text{for all } i \in \{1, \dots, n\}, \\
\prod_{i=1}^n \|\mathbf{b}_i\|
    &\le 2^{\alpha n^2 + O(n^2)} \det B,
\end{align*}
where $B = [\mathbf{b}_1, \dots, \mathbf{b}_n]$ and $\lambda_i(B)$ is the $i$-th successive minimum of the lattice generated by $B$. Hence, $\alpha$ plays the same mathematical role as $\delta-$constant in the Lov\'asz inequality that governs how orthogonal and how computationally expensive the reduced basis will be. Formally, a Flatter-reduced basis is not LLL-reduced because it replaces the Lov\'asz condition with a new global drop condition, but the resulting basis satisfies the same geometric and approximation guarantees as an LLL-reduced basis, effectively making it an equivalent form of lattice reduction.

The FLATTER algorithm has an asymptotic heuristic running time of
$$ O\!\left(n^{\omega}(C + n)^{1+\varepsilon}\right), $$
where $\omega \in (2,3]$ is the matrix multiplication exponent, $C = \log(\|B\|\|B^{-1}\|)$ (where $\|.\|$ denotes the spectral norm) bounds the condition number of the input basis and $\varepsilon > 0$ is an arbitrarily small constant accounting for the subpolynomial overhead of fast arithmetic operations on $O(C+n)$-bit numbers. This cost arises from recursive compression and sublattice-reduction steps, each dominated by size reduction and QR factorization at precision $O(C + n)$. The resulting complexity matches that of prior heuristic recursive methods \cite{improvedlll,segment,floatinglll} but is obtained under significantly weaker assumptions, allowing FLATTER to achieve LLL-equivalent reduction quality with practical, near--matrix-multiplication speed.

The implementation of this new algorithm was benchmarked extensively by the authors, against fpLLL\footnote{https://github.com/fplll/fplll} (the current gold-standard implementation) and previous recursive methods. Across a range of lattices, FLATTER outperformed existing tools and could successfully reduce even lattices of dimension 8192 in 6.4 core years. FLATTER's implementation is available on Github\footnote{https://github.com/keeganryan/flatter}. All the experiments in this paper failed to yield a correct result with the usage of fpLLL's reduction algorithms (we used a combination of LLL and BKZ with different blocksizes) while FLATTER's algorithm successfully reduced our input bases.

\subsection{NTRU-HPS}
In this section, we discuss about the NTRU-HPS. Let the polynomial ring ${\mathcal{R}}={\mathbb{Z}}[x]/\langle x^N-1 \rangle$ and we write $\star$ for the multiplication in the ring. If 
\[a(x)=a_{N-1}x^{N-1}+\cdots +a_0\ \text{ and }\ b(x)=b_{N-1}x^{N-1}+\cdots +b_0,\]
then 
$c(x)=a(x)\star b(x),$ is given by
$$c_k = \sum_{i+j\equiv k\ {\rm{mod}}{N}}a_ib_j,\ 0\le k\le N-1.$$

Alice selects public parameters $(N, q, d)$, with $N$ being prime number and $\gcd(q,N)$ $=\gcd(3,q)=1.$ Usually $N$ and $q$ are large, and $q$ is a power of $2.$	
With 
${\mathcal{T}}_a$ we denote the set of ternary polynomials\footnote{A ternary polynomial is one that has as coefficients only the integers $-1,0,1.$ } of ${\mathcal{R}}$ with degree at most $a$ and $\mathcal{T}_a(d_1,d_2)\subset {\mathcal{T}}_{a}$ consists from elements of ${\mathcal{T}}_{a}$ with $d_1$ coefficients equal to $1$ and $d_2$ equal to $-1.$ 

For her private key, Alice randomly selects $({f}(x),{g}(x))$ such that ${ f}(x) \in {\mathcal{M}}_f={\mathcal{T}}_{N-2}$ and ${ g}(x) \in {\mathcal{M}}_g={\mathcal{T}}_{N-2}(\frac{q}{16}-1,\frac{q}{16}-1)$. It is important that ${f}(x)$ is invertible in both ${\mathcal{R}}/q$ and ${\mathcal{R}}/3$. The inverses in ${\mathcal{R}}/3$ and ${\mathcal{R}}/q$ can be efficiently computed using the Euclidean algorithm and Hensel's Lemma, see \cite[Proposition 6.45]{hoffstein}. Let ${F}_q(x)$ and ${ F}_3(x)$ represent the inverses of ${ f}(x)$ in ${\mathcal{R}}/q$ and ${\mathcal{R}}/3$, respectively.	
	
 Alice next computes
\begin{equation}
{ h}(x) = {F}_q(x) \star { g}(x)\mod{q}.
\label{eq:pubkey}
\end{equation}
The polynomial ${h}(x)$ is Alice's public key.

Bob's plaintext is a polynomial ${ m}(x) \in {\mathcal{M}}_m,$ where
${\mathcal{M}}_m={\mathcal{T}}_{N-2}(\frac{q}{16}-1,\frac{q}{16}-1)$.  Then he chooses a random ephemeral key	${ r}(x) \in {\mathcal{M}}_r={\mathcal{T}}_{N-2}$ and computes the ciphertext,
\begin{equation}
{ c}(x) = 3{ r}(x) \star { h}(x) + { m}(x) \, \bmod\,  q.
\label{eq:encrypt}
\end{equation}
Finally, Bob sends to Alice the ciphertext
${ c}(x)\in {\mathcal{R}}/q.$
	
	To decrypt, Alice computes $${ v}(x) = {f}(x) \star {c}(x) \, \bmod\,  q.$$
Then, she centerlifts ${v}(x)$ to an element of ${\mathcal{R}},$ say ${v}'(x),$  and she finally computes,
	\[{b}(x) = {F}_3(x) \star {v}'(x) \, \bmod\,  3.\] 
Therefore, ${b}(x)$ is equal to the plaintext ${m}(x)$ (this is true when a simple inequality between $d, q,$ and $N$ is satisfied). For the exact values of $N, q$ we will use the ones proposed by NIST~\cite{nist2-ntru}.

\subsection{Knapsack Problem}
Here we discuss the knapsack problem in cryptography.
\begin{definition}[Knapsack Problem]
   Given $\textbf{a}=(a_1,\dots ,a_m)$, $ a_i \in \mathbb{N}$ and $s \in \mathbb{Z}$, find $\textbf{x}=(x_1,\dots ,x_m) \in \{0,1\}^m$ if it exists, such that $$\sum_{i=1}^{m} a_i x_i = s.$$
\end{definition}

This problem is NP-complete~\cite{np-complete}. 
A variation of the knapsack problem is the modular knapsack problem, which is the following:
\begin{definition}[Modular Knapsack Problem]
   Given a modulus $q$, vector $\textbf{a}=(a_1,\dots ,a_m)$, $ a_i \in \mathbb{N}$ and integer $s \in \mathbb{Z}$, find $\textbf{x}=(x_1,\dots ,x_m) \in \{0,1\}^m$ if it exists, such that $$ \sum_{i=1}^{m} a_i x_i \equiv s \mod q.$$
\end{definition}

We shall provide more general definitions using an abstract Abelian (additive) group $G.$
\begin{definition}($(G,m,{\mathcal{B}}$)-knapsack).
    For an Abelian group $G$ (written additively), integer $m$, and ${\mathcal{B}}\subset\mathbb{Z}$ small with $0\in {\mathcal{B}}$, 
    the \emph{$(G,m,{\mathcal{B}})$-knapsack problem over $G$,} asks: given elements $g_1,\dots,g_m\in G$ and a target $s\in G$, 
    find coefficients $x_i\in {\mathcal{B}}$ such that
    \[s =\sum_{i=1}^m x_i\,g_i.\]
\end{definition}

For instance, if $G={\mathbb{Z}}_q^{n}$ for some positive integer $q,$ and say $A\in G^{m}$ i.e. $A$ is a $n\times m$ matrix with columns in $G$ we are asking for solutions ${\bf x}\in {\mathcal{B}}^m$ such that $A{\bf x}^T={\bf s}^T\pmod{q}.$ It's worth noting that the knapsack problem over ${\mathbb{Z}}_q$ is equivalent to the Inhomogeneous Short Integer Solution problem $ISIS_q$~\cite{ISIS23}. 

There are three type of attacks in knapsack problems. Meet-in-the-middle, branch-and-bound and lattice-based. In meet-in-the-middle attacks~\cite{hybrid attack} the set of variables is split into two halves, all partial sums for each half are computed and then a collision between the two halves is searched for, that reconstructs the target sum. In branch-and-bound attacks~\cite{papadopoulou,branch-and-bound} depth-first search strategies are used to explore the solution space of integer combinations systematically, pruning subtrees that cannot lead to valid solutions and at the same time bounding the partial sum, norm of the solution or residual target distance. Finally, with lattice-based attacks one can get a solution to the knapsack problem by reducing the problem to the CVP or the SVP in certain lattices or by using reduction algorithms to find short vectors in a lattice that corresponds to valid knapsack solutions. 
The latter is what the authors in~\cite{lenstra-system} did in their lattice-based attack to the problem. We will call this method AHL and we describe it below.

\subsubsection{AHL Knapsack Algorithm}
Let $A$ be an integer $N \times k$ matrix, with $k\leq N$ and  $\textbf{s}$ an integer column $k$-vector. Aardal, Hurkens and A. Lenstra~\cite{lenstra-system} developed an algorithm to solve a system of Diophantine equations $AX= \textbf{s}$, with lower and upper bounds $0\leq X\leq \textbf{u}$, where $\textbf{u}$ is an integer $N$-vector. This is an NP-complete problem. In the absence of bound constraints it can be solved in polynomial time, for instance using Smith Normal Form (SNF)\footnote{See Appendix \ref{appendixA} for details about SNF.}. 

The authors first create the matrix $B$:

$$	B =\left[\begin{array}{c|c|c}

			I_N & \textbf{0}_{N\times 1} & N_2 A_{N\times k}  \\
			\hline
			\textbf{0}_{1\times N} & N_1  & -N_2 \textbf{s}_{1\times k} \\
            
		\end{array}\right],
$$
where $N_1,N_2$ are two positive integer numbers with $N_1 < N_2$. Then they use the LLL reduction algorithm to get the reduced form of the basis formed by the columns of $B$, which they denote $\hat{B}$. 
For suitably chosen $N_1, N_2 \in \mathbb{Z}$, the vector $\mathbf{x}$ is given by the first $N$ entries of the $(N - k + 1)$-th column of $\hat{B}$.
We shall use the previous ideas to implemented a ${\rm{mod}}\ q$ variant and we shall use it for our attack. In this paper, we work with row representations; that is, when a reduction is applied to a matrix, it refers to a row reduction.

\section{Construction of our Lattice}\label{sec:system}

 Starting from the NTRU-HPS encryption equation (\ref{eq:encrypt}) we get:
\begin{equation}\label{equation:ATTACK}
3^{-1} (c(x)-m(x)) \equiv h(x) \star r(x) \mod{q}.
\end{equation}
Our goal is to recover $r(x)$ (knowing $r(x)$ is equivalent to knowing $m(x)$). The idea is the following: using the information about $m_i$'s ($i=1,2,...,k)$ to construct a linear system in ${\mathbb{Z}}_q$ with $k-$equations and $N-$unknowns. We already know that it has  a small solution, namely the nonce ${\bf r}$. We shall apply lattice base methods, described earlier, to find this small solution.

The left-side of the previous equation (\ref{equation:ATTACK}) can be written as: 
\[
\sum_{j=0}^{N-1} 3^{-1}(c_j - m_j) x^j \in \mathcal{R}/q,
\]
and the right-side of (\ref{equation:ATTACK}) as:
\[
h(x) \star r(x) = \sum_{\ell =0}^{N-1} a_ \ell x^\ell \ \text{and}\ a_\ell = \sum_{i+j \equiv \ell \bmod{N}} h_jr_i.
\]
Furthermore, we define the vectors ${\bf a}_\ell \in {\mathbb{Z}}^N$, 
\begin{equation}
\textbf{a}_\ell = (h_{(\ell \bmod N)} ,\; h_{(\ell-1 \bmod N)},\; \dots ,h_{(\ell-(N-1) \bmod N)}),\ (0\leq \ell \leq N-1).
\label{equation:RIGHTSIDEVECTORS}
\end{equation}
Now, if we set ${\bf r}=(r_0,...,r_{N-1})$ we define $a_{\ell}={\bf a}_{\ell}\cdot {\bf r}.$ From the encryption equation \eqref{eq:encrypt}, if we know $k$ coefficients of the message, say
$\{
m_{i_0},m_{i_1},\dots,m_{i_{k-1}}
\},$
where
\[
\mathbf m=(m_0,m_1,\dots,m_{k-1},\dots,m_{N-1}),
\]
then we uniquely determine the integers
$\{
a_{i_0},a_{i_1},\dots, a_{i_{k-1}}
\},$ defined earlier.

Without loss of generality we assume that we know the first $k$ coefficients of $m(x).$ We form the $k \times N$ matrix $A$ with the vectors ${\bf a}_0,{\bf a}_1,...,{\bf a}_{k-1},$ as rows:
\begin{equation}\label{equation:MATRIXA}
A = 
\begin{bmatrix}
- & {\bf a}_0 & - \\
- & {\bf a}_1 & - \\
 & \cdots \\
- & {\bf a}_{k-1} & - \\
\end{bmatrix} =
\begin{bmatrix}
h_{0} & h_{N-1} & \dots & h_{1} \\
h_{1} & h_{0} & \dots & h_{2} \\
\vdots & \vdots & \ddots & \vdots \\
h_{k-1} & h_{k-2} & \dots & h_{(k-N){\rm{mod}} N}
\end{bmatrix}.
\end{equation}
Thus, we have $A{\bf r}^T={\bf T}_k,$ where 
\begin{equation}\label{Tk}
{\bf T}_k=(a_0,...,a_{k-1})^T.
\end{equation}
Our system (over ${\mathbb{Z}}_k$),
\begin{equation}\label{system}
A{\bf X}={\bf T}_k,
\end{equation}
will have $k$ equations and $N$ unknowns, ${\bf X}$ is a $N \times 1$ column vector that represents the unknown polynomial $r(x)$ and ${\bf T}_k$ is the $k \times 1$ column vector with the known entries $a_0,...,a_{k-1}$.

We construct the matrix $B_k$ of dimension $(N+k+1)\times (N+k+1)$,
\begin{equation}\label{relation:Bk}
 	B_k =\left[\begin{array}{c|c|c}

			I_N & \textbf{0}_{N\times 1} & N_2 A^T  \\
			\hline
			\textbf{0}_{1\times N} & N_1  & -N_2 \textbf{T}^T_{k} \\
            \hline
			\textbf{0}_{k\times N} & \textbf{0}_{k\times 1}  & N_2qI_k \\
            
		\end{array} \right] (A^T\ \text{is }{N\times k}),   
\end{equation}
where $A$ is defined in (\ref{equation:MATRIXA}), $N_1,N_2$ are positive integers which we shall determine later,  and $\mathcal{L} \subset\mathbb{Z}^{N+k+1}$ is the lattice generated by the rows of $B_k.$ This lattice is $q-$ary of full rank of dimension $N+k+1$ and volume $N_1(N_2q)^k$. 

The basis vectors are:
\begin{flalign*}
\textbf{b}_0  &= \bigl(\,\underbrace{1,0,\dots,0}_{\!N\text{ entries}\!},\,0\ ,
N_2h_0,\,N_2h_{1},\,\dots,\,N_2h_{k-1}\bigr) \\
\textbf{b}_1  &= (0, 1, \dots,0,\ 0\ , N_2h_{N-1}, N_2h_0, \dots, N_2 h_{k-2}) \\
\vdots & \\
\textbf{b}_{N-1}  &= (\underbrace{0, \dots, 0,1}_{\!N\text{ entries}\!},\,0\ , N_2h_{1}, N_2h_{2}, \dots, N_2 h_{(N-k){\rm{mod} N}}) \\
\textbf{b}_N  &= (\underbrace{0,0,\dots,0}_{\!N\text{ entries}\!},\  N_1\ , -N_2a_{0},  -N_2a_{1}, \dots,  -N_2a_{k-1}) \\
\textbf{b}_{N+1}  &= (\underbrace{0,0,\dots,0}_{\!N\text{ entries}\!},\ 0 \ ,\ \underbrace{N_2q,0, \dots, 0}_{\!k\text{ entries}\!}) \\
\vdots & \\
\textbf{b}_{N+k}  &= (\underbrace{0,0,\dots,0}_{\!(N+k)\text{ entries}\!}, N_2q). \\
\end{flalign*}
The lattice points are of the form,
$ (\lambda_0,\dots,\lambda_{N-1},N_1\lambda_N,N_2\beta_0,\dots,N_2\beta_{k-1} ),$
where $\lambda_j,\beta_j$ are integers (with $q|\beta_j$). 
In more details, let $(\lambda_0, \lambda_1, \dots,  \lambda_{N+k})$
 integer vector, and set ${\bf \Lambda}_N=(\lambda_0, \lambda_1, \dots,\lambda_{N-1}).$
Then the lattice points are of the form:

\begin{equation}
\label{lattice_points}
\begin{array}{rcl}
\displaystyle
\sum_{j=0}^{N+k}\lambda_j{\bf b}_j
&=&
\Big(
\lambda_0,\dots,\lambda_{N-1},\, N_1\lambda_N, \\[0.5em]
&&
\quad N_2\underbrace{({\bf \Lambda}_N\!\cdot\! \mathbf a_0 - \lambda_N a_0)}_{\text{1st equation of (\ref{system})}} + N_2q\lambda_{N+1}, \\[0.5em]
&&
\quad N_2\underbrace{({\bf \Lambda}_N\!\cdot\! \mathbf a_1 - \lambda_N a_1)}_{\text{2nd equation}} + N_2q\lambda_{N+2}, \\[0.5em]
&&
\quad \quad \quad \quad \quad \quad \quad  \quad \quad  \vdots \\[0.5em]
&&
\quad N_2\underbrace{({\bf \Lambda}_N\!\cdot\! \mathbf a_{k-1} - \lambda_N a_{k-1})}_{k\text{th equation}} + N_2q\lambda_{N+k}
\Big),
\end{array}
\end{equation}

\noindent
or equivalently,
\begin{equation*}
\begin{array}{rcl}
\displaystyle
\sum_{j=0}^{N+k}\lambda_j{\bf b}_j
&=&
\Big(
{\bf \Lambda}_N,\, N_1\lambda_N, \\[0.5em]
&&
\quad N_2({\bf \Lambda}_N\!\cdot\! \mathbf a_0 - \lambda_N a_0 + \lambda_{N+1}q), \\[0.5em]
&&
\quad N_2({\bf \Lambda}_N\!\cdot\! \mathbf a_1 - \lambda_N a_1 + \lambda_{N+2}q), \\[0.5em]
&&
\quad \quad \quad \quad \quad \quad \quad \quad  \vdots \\[0.5em]
&&
\quad N_2({\bf \Lambda}_N\!\cdot\! \mathbf a_{k-1} - \lambda_N a_{k-1} + \lambda_{N+k}q)
\Big).
\end{array}
\end{equation*}

If we can find a vector of the form:
\begin{equation}\label{lambda_N}
  ({\boldsymbol \Lambda}_N, N_1, 0, \ldots, 0) \in \mathbb{Z}_q 
\quad (\text{i.e., } \lambda_N = 1)  
\end{equation}
for some suitable integers 
\((\lambda_0, \ldots, \lambda_{N-1})\),
then \({\boldsymbol \Lambda}_N\) constitutes a solution to the system 
\[
A{\bf X} = {\bf T}_k \quad \text{over } \mathbb{Z}_q.
\]
Also the inverse is true. If there are $\lambda_0,\lambda_1,...,\lambda_{N-1}$ such that $A{\bf \Lambda}_N^T={\bf T}_k$, then the vector ${\bf z}=({\bf \Lambda}_N,N_1,qN_2\rho_{N+1},...,qN_2\rho_{N+k})$ belongs to ${\mathcal{L}}.$ Indeed, ${\bf z}$ is written as the integer linear combination: $\sum_{j=0}^{N-1}\lambda_j{\bf b}_j+N_1{\bf b}_{N+1}+\sum_{j=N+1}^{N+k}\rho_j{\bf b}_j.$ We proved the following.
\begin{lemma}\label{Lemma:shape_lemma}
The integer vector ${\bf x}=(x_0,\dots,x_{N-1})$ is a solution of $AX={\bf T}_k$ in ${\mathbb{Z}}_q$ if and only if $({\bf x}, N_1, qN_2\rho_{N+1},\dots,qN_2\rho_{N+k})$ is a point of the lattice ${\mathcal{L}}.$ 
\end{lemma}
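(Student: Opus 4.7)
My plan is to prove both directions by directly inspecting the general form of a lattice point given in \eqref{lattice_points}, which already expresses every element $\sum_{j=0}^{N+k}\lambda_j{\bf b}_j$ coordinate-by-coordinate in terms of the free integers $\lambda_0,\dots,\lambda_{N+k}$. Because $B_k$ is block-structured, matching coordinates on the left- and right-hand sides of the lattice-membership condition will immediately force the values of the $\lambda_j$'s and leave a clean modular equation in the last $k$ coordinates.

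For the ``only if'' direction I would assume that $({\bf x}, N_1, qN_2\rho_{N+1},\ldots,qN_2\rho_{N+k})\in\mathcal{L}$, write it as $\sum_{j=0}^{N+k}\lambda_j{\bf b}_j$, and read off the coordinates of \eqref{lattice_points}. The first $N$ coordinates force $\lambda_j=x_j$ for $j=0,\ldots,N-1$, and the $(N+1)$-th coordinate $N_1\lambda_N=N_1$ forces $\lambda_N=1$. Substituting into the last $k$ coordinates gives, for each $i=0,\ldots,k-1$,
\[
N_2\bigl({\bf x}\cdot{\bf a}_i-a_i+\lambda_{N+i+1}q\bigr)=qN_2\rho_{N+i+1},
\]
so ${\bf x}\cdot{\bf a}_i\equiv a_i\pmod q$. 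Stacking these $k$ congruences is precisely $A{\bf x}^T={\bf T}_k$ in $\mathbb{Z}_q$.

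For the ``if'' direction I would start from $A{\bf x}^T={\bf T}_k\pmod q$, so there exist integers $s_0,\ldots,s_{k-1}$ with ${\bf x}\cdot{\bf a}_i-a_i=s_i q$. Plugging $\lambda_j=x_j$ for $j<N$, $\lambda_N=1$, and arbitrary integers $\lambda_{N+1},\ldots,\lambda_{N+k}$ into \eqref{lattice_points}, the last $k$ entries become $qN_2(s_i+\lambda_{N+i+1})$. Choosing $\rho_{N+i+1}:=s_i+\lambda_{N+i+1}$ exhibits $({\bf x},N_1,qN_2\rho_{N+1},\ldots,qN_2\rho_{N+k})$ as an explicit integer combination of ${\bf b}_0,\ldots,{\bf b}_{N+k}$, hence a lattice point; equivalently, any lattice vector whose first $N+1$ coordinates are $({\bf x},N_1)$ must have last $k$ coordinates divisible by $qN_2$, so some choice of $\rho_{N+i+1}$ works.

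I do not expect a real obstacle here: the content is essentially already contained in the derivation of \eqref{lattice_points}, and the proof amounts to unwinding that formula in both directions. The only subtlety to state carefully is the indexing, namely that the $i$-th equation of \eqref{system} corresponds to the basis vector ${\bf b}_{N+i+1}$, and that the $\rho_{N+i+1}$ are not determined by ${\bf x}$ alone but may be shifted by any integer multiple of the freely chosen $\lambda_{N+i+1}$, reflecting the $q\mathbb{Z}$-ambiguity in lifting a solution from $\mathbb{Z}_q$ to $\mathbb{Z}$.
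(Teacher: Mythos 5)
Your proof is correct and follows essentially the same route as the paper: both directions are obtained by unwinding the explicit coordinate formula \eqref{lattice_points} for a general lattice point, matching the first $N+1$ coordinates to pin down $\lambda_0,\dots,\lambda_N$ and then reading off the mod-$q$ congruences (respectively choosing the $\rho$'s) in the last $k$ coordinates. The only remarks are that your ``if''/``only if'' labels are swapped relative to the lemma's phrasing, and that your explicit handling of the $q\mathbb{Z}$-shift absorbed into the $\rho_{N+i+1}$'s is in fact slightly more careful than the paper's own wording of the converse direction.
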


\begin{remark}
Let ${\bf r}=(r_0,\dots,r_{N-1})$ be a nonce of NTRU. Then $A{\bf r}^T={\bf T}_k\pmod{q},$ so the vector 
$${\bf v}=(\pm{\bf r},\pm N_1,{\bf 0}_k)\in {\mathcal{L}},$$
where the signs are taken as $(+,+),(-,-).$ Indeed, by definition of ${\bf r}$ we have $A{\bf r}^T={\bf T}_k\pmod{q},$ so there is some vector ${\bf R}=(\rho_1,...,\rho_k)$ such that 
$A{\bf r}^T={\bf T}_k-q{\bf R}.$ Therefore, we choose $\lambda_N=\pm 1,\lambda_j=\lambda_N r_j \ (0\le j\le N-1)$, and $\lambda_{N+j}=\lambda_N\rho_j\ (1\le j\le k).$ Then, using $(\ref{lattice_points})$ we get ${\bf v}=\sum_{j=0}^{N+k}\lambda_j{\bf b}_j.$ 
Furthermore, the norm satisfies, 
\[\| {\bf v} \|^2=\|{\bf r} \|^2+N_1^2<N^2+N_1^2.\]
Thus, ${\bf v}$ is  a short vector of ${\mathcal{L}}.$ In general, small solutions of the system $AX={\bf T}_k\pmod{q}$ provide small non-zero vectors of ${\mathcal{L}},$ and the inverse.
\end{remark}
We shall prove that for suitable $N_1, N_2$ (and under some plausible conditions) the LLL-reduced matrix $\hat{B}_k$  is of the form
\begin{equation}\label{reduced_matrix}
  \hat{B}_k \;=\;
\begin{pmatrix}
\hat{b}_{0,0} & \cdots & \hat{b}_{0,\,N-1} & \varepsilon_{0} & 0 &\cdots & 0\\[6pt]
\vdots  & \ddots & \vdots      & \vdots            &        & \ddots            \\[6pt]
\hat{b}_{N-1,0} & \cdots & \hat{b}_{N-1,\,N-1} & \varepsilon_{N-1} & 0 & \cdots & 0 \\[6pt]
\hat{b}_{N,0}       & \cdots & \hat{b}_{N,N-1}          & \varepsilon_{N}  & *     & \cdots & *      \\[6pt]
\hat{b}_{N+1,0}       & \cdots & \hat{b}_{N+1,N-1}         & \varepsilon_{N+1}         & \multicolumn{3}{c}{\multirow{3}{*}{\Large${\bf C}_{k}$}} \\ 
\vdots  &        & \vdots      & \vdots        &        &        &          \\ 
\hat{b}_{N+k,0}        & \cdots & \hat{b}_{N+k,N-1}            & \varepsilon_{N+k}            &        &        &          
\end{pmatrix},
\end{equation}
where $\varepsilon_{i}\equiv 0\pmod {N_1}$ for some indexes $i$ and $\varepsilon_i=0$ for the remaining indexes, and ${\bf C}_k$ is a $k\times k$ matrix. We remark that, at least one of the ${\varepsilon}_i$ will be $N_1$ or $-N_1.$ 
To see this we first remark that LLL-reduction makes the following two
operations to the rows of $B_k,$
$${\text{row}}_j\leftrightarrow {\text{row}}_i$$
$${\text{row}}_j\leftarrow {\text{row}}_j-\lambda {\text{row}}_i\ (\lambda\in {\mathbb{Z}}),\ \text{for}\ j>i.$$ 
So, the $(N+1)$-column, i.e. the vector
$(0,...,0,N_1,0,...0),$ after the LLL-reduction shall contain only $0$ and some multiples of $N_1.$ Since the gcd of the $(N+1)$-column remains the same after the LLL reduction we get,
\begin{equation}\label{column_N+1}
\gcd(\varepsilon_0,...,\varepsilon_{N+k})=\gcd(0,...,0,N_1,0,...,0)=N_1.
\end{equation}
So, indeed $\varepsilon_{i}\equiv 0\pmod {N_1}.$ 

 Let a $k\times N$ $(k<N)$ matrix A with $rank(A)=k$. For the following Theorem we need the Smith Normal Form of $A$, which is given by:
\[
  D \;=\; P\,A\,Q,
  \quad P\in GL_k(\mathbb Z),\;Q\in GL_N(\mathbb Z),
  \quad D={\rm{diag}}(d_1,\dots,d_k,0,\dots,0)\in {\mathbb{Z}}^{k\times N}\]
with \(d_i\mid d_{i+1}\) and the last \(N-k\) diagonal entries of \(D\) are zero. Let,
\[ Q=[{\bf q}_1\mid\cdots\mid {\bf q}_N].\] Since, $rank(A)=k$, then  for each \(j\in \{k+1,\dots,N\}\) the columns ${\bf q}_j$ generate $Ker_{\mathbb{Z}}(A).$ 
In Appendix \ref{appendixA}, we provide the proof of the following Theorem.
\begin{theorem}\label{theorem}
If $V={\rm{span}}({\bf q}_{k+1},...,{\bf q}_{N})$, $W={\rm{span}}({\bf e}_1,...,{\bf e}_k)$ subspaces of ${\mathbb{R}}^N$ and $V\cap W=\{ {\bf 0} \},$ then there are $N_1$, $N_2$ such that, the LLL-reduced matrix of $B_k$ is of the form $\hat{B_k}.$ In fact we prove that 
$2^{N+k}N_1^2<c(N,k)<N_2^2$ for some constant $c(N,k).$
\end{theorem}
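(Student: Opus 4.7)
The plan is to exhibit, under the hypothesis $V\cap W=\{\mathbf{0}\}$, a family of $N+1$ linearly independent short vectors lying entirely in the sublattice
\[
\mathcal{L}_0 \;:=\; \{\mathbf{v}\in\mathcal{L}\;:\;\text{the last $k$ coordinates of } \mathbf{v} \text{ are zero}\}
\]
and then, by choosing $N_2$ large enough, force LLL via Lemma~\ref{Lemma:lll-bound} to draw its first $N+1$ reduced basis vectors from $\mathcal{L}_0$. The gcd argument around (\ref{column_N+1}) then finishes the $(N+1)$-th column, and a rank count disposes of the bottom block $\mathbf{C}_k$.

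\textbf{Step 1 (constructing $N+1$ short vectors in $\mathcal{L}_0$).} I would check that the following $N+1$ vectors are in $\mathcal{L}_0$:
\[
\mathbf{v}_0=(\mathbf{r},N_1,\mathbf{0}_k),\quad \mathbf{u}_j=(\mathbf{q}_j,0,\mathbf{0}_k)\;(k+1\le j\le N),\quad \mathbf{w}_i=(q\mathbf{e}_i,0,\mathbf{0}_k)\;(1\le i\le k).
\]
The vector $\mathbf{v}_0$ is in $\mathcal{L}$ by the Remark; each $\mathbf{u}_j$ is in $\mathcal{L}$ since $A\mathbf{q}_j=\mathbf{0}$ over $\mathbb{Z}$ by the SNF (so the constraints in (\ref{lattice_points}) vanish after choosing $\lambda_{N+i}=0$); each $\mathbf{w}_i$ is in $\mathcal{L}$ because $A(q\mathbf{e}_i)\equiv\mathbf{0}\pmod q$. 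Linear independence of $\{\mathbf{w}_i\}\cup\{\mathbf{u}_j\}$ in $\mathbb{R}^{N+k+1}$ reduces to linear independence of $\{\mathbf{e}_1,\dots,\mathbf{e}_k,\mathbf{q}_{k+1},\dots,\mathbf{q}_N\}$ in $\mathbb{R}^N$, which is precisely the hypothesis $V\cap W=\{\mathbf{0}\}$; adjoining $\mathbf{v}_0$, whose $(N+1)$-th coordinate is nonzero while that of all $\mathbf{u}_j,\mathbf{w}_i$ is zero, yields $N+1$ independent vectors.

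\textbf{Step 2 (norm bounds and choice of constants).} Bound the squared norms by $\|\mathbf{v}_0\|^2\le N+N_1^2$, $\|\mathbf{w}_i\|^2=q^2$, and $\|\mathbf{u}_j\|^2=\|\mathbf{q}_j\|^2$. Put
\[
c(N,k)\;:=\;2^{\,N+k}\cdot\max\!\Bigl\{\,q^2,\;\max_{j>k}\|\mathbf{q}_j\|^2,\;N+N_1^2\,\Bigr\},
\]
so automatically $2^{N+k}N_1^2<c(N,k)$, and choose any $N_2$ with $N_2^2>c(N,k)$.

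\textbf{Step 3 (LLL forces the zero block).} By Lemma~\ref{Lemma:lll-bound} applied to these $N+1$ vectors inside the lattice of rank $N+k+1$, the first $N+1$ rows $\hat{\mathbf{b}}_0,\dots,\hat{\mathbf{b}}_N$ of the LLL-reduced basis satisfy
\[
\|\hat{\mathbf{b}}_i\|^2 \;\le\; 2^{\,N+k}\max\bigl\{\|\mathbf{v}_0\|^2,\|\mathbf{u}_j\|^2,\|\mathbf{w}_i\|^2\bigr\}\;\le\; c(N,k)\;<\;N_2^2.
\]
By (\ref{lattice_points}) every vector of $\mathcal{L}$ has last $k$ coordinates that are integer multiples of $N_2$; hence any nonzero such coordinate contributes at least $N_2$ to the norm. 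Therefore the last $k$ coordinates of each $\hat{\mathbf{b}}_i$ ($0\le i\le N$) are exactly zero, producing the upper-right zero block of (\ref{reduced_matrix}).

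\textbf{Step 4 ($(N+1)$-th column and bottom block).} The $(N+1)$-th column of the original $B_k$ is $(0,\dots,0,N_1,0,\dots,0)^T$ with gcd $N_1$. Unimodular LLL row operations preserve the gcd of any column, so $\gcd(\varepsilon_0,\dots,\varepsilon_{N+k})=N_1$; in particular each $\varepsilon_i$ is a multiple of $N_1$ and at least one equals $\pm N_1$. Finally, since the first $N+1$ rows of $\hat B_k$ all lie in $\mathcal{L}_0$ while the full lattice $\mathcal{L}$ has rank $N+k+1$ and volume $N_1(N_2q)^k\neq 0$, the remaining $k$ rows must surject onto a full-rank subgroup of $\mathcal{L}/\mathcal{L}_0$; their restrictions to the last $k$ coordinates therefore form a $k\times k$ block $\mathbf{C}_k$ of nonzero determinant, as claimed.

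\textbf{Main obstacle.} The LLL estimate is essentially automatic; the subtle point is Step~1—producing the $(N+1)$-th independent short vector in $\mathcal{L}_0$. Without $V\cap W=\{\mathbf{0}\}$, the family $\{q\mathbf{e}_i\}\cup\{\mathbf{q}_j\}_{j>k}$ may fail to span $\mathbb{R}^N$, so $\mathcal{L}_0$ need not supply $N+1$ short independent vectors and the entire norm comparison with $N_2$ collapses. Making this hypothesis genuinely unavoidable (and quantifying $\max_{j>k}\|\mathbf{q}_j\|^2$ in terms of the SNF of $A$) is where the real work lies.
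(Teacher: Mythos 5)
Your proposal is correct and follows essentially the same route as the paper's proof: exhibit linearly independent lattice vectors whose last $k$ coordinates vanish, set $c(N,k)$ to $2^{N+k}$ times the maximum of their squared norms, pick $N_2^2>c(N,k)$, and combine Lemma~\ref{Lemma:lll-bound} with the fact that the last $k$ coordinates of every lattice vector are multiples of $N_2$ to force the zero block. The only (harmless) deviation is that you use $N+1$ independent vectors $\{(\mathbf{r},N_1,\mathbf{0}_k)\}\cup\{(\mathbf{q}_j,0,\mathbf{0}_k)\}\cup\{(q\mathbf{e}_i,0,\mathbf{0}_k)\}_{i\le k}$ with a cleaner independence argument via the nonzero $(N+1)$-th coordinate, whereas the paper uses $N$ vectors (dropping $q\mathbf{e}_k$) and argues $\mathbf{r}\notin\mathrm{span}$ from $\mathbf{T}_k\not\equiv\mathbf{0}\pmod q$, so your version even yields the zero block one row further down, which is still consistent with the claimed shape of $\hat{B}_k$.
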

Say ${\hat{\bf{b}}}_{i_0}$ a row of the LLL reduced matrix ${\hat{B}}_k,$ that has $\varepsilon N_1$ ($\varepsilon\in \{-1,1\}$) in $N+1$ entry ($N+1$ entry is the element ${\hat{b}}_{i_0,N}$ since we started counting from $0$). Then,
the vector  
${\bf x}_{\varepsilon}=(\varepsilon \hat{b}_{i_0,0},\varepsilon \hat{b}_{i_0,1},\dots,\varepsilon\hat{b}_{i_0,N-1})$
is  a solution of the system $AX={\bf T}_k$ in ${\mathbb{Z}}_q$ if $i_0\le N-1.$  This is immediate from Lemma (\ref{Lemma:shape_lemma}). We shall use the previous idea to the attack presented below.

\section{The attack}\label{section:attack}

The procedure for the attack consists of two main stages. In the first stage, we construct the matrix $B_k$ (\ref{relation:Bk}) and apply the reduction routine \textsc{Flatter}. In the second stage, we extract from the $(n+1)$-th column of the reduced matrix (cf.\ (\ref{reduced_matrix})), the first $N$ entries, which we denote by
$(\varepsilon_0,\dots,\varepsilon_{N-1}).$ For every $\varepsilon_i$ we check whether $N_1 \mid \varepsilon_i$ and we define the quotient as $quotient = \varepsilon_i/N_1$. We also extract the first $N$ entries from the row indexed by $\varepsilon_i$ ($row=\hat B[index(\varepsilon_i)][N]$). If the quotient is an integer, then we form the possible solution as,
$$\mathbf r'=\bigl(row[0]/quotient,\dots,row[N-1]/quotient\bigr).$$ 
If the Euclidean norm $\|\mathbf r'\|$ is smaller than a prescribed threshold \texttt{app\_value}\footnote{The threshold is chosen from the expected norm of a random ternary vector.  If $r_i\in\{-1,0,1\}$ are i.i.d., then $S=\sum_{i=0}^{N-1} r_i^2\sim\mathrm{Bin}(N,2/3)$ so $E[S]=2N/3$. For large $N$ we have $E[\sqrt{S}]\approx\sqrt{E[S]}$, which yields the empirical thresholds \texttt{app\_value}$\approx 19$ for \texttt{ntruhps2048509} ($N=509$), $\approx 21$ for \texttt{ntruhps2048677} ($N=677$) and $\approx 24$ for \texttt{ntruhps4096821} ($N=811$).}, we return $\mathbf r'$, otherwise the algorithm reports failure. Note that the algorithm does not always recover the nonce in every instance.

In practice, we have observed that the solution vector can be found in any of the $[0,N-1]$ positions in the $(n+1)$-th column of the reduced matrix, contrary to what was stated in \cite{lenstra-system}, where they get the solution vector specifically from the $N+1$ row of the reduced matrix. 
In Algorithm \ref{alg:algo} we present the pseudocode of our attack and in Table \ref{table:attack1} the results of our experiments.

It is worth emphasizing that the reduction was carried out using FLATTER~\cite{flatter}. In contrast, the fpLLL implementation were unable to reproduce successful recoveries under identical parameters. 

\begin{algorithm}[H]
\caption{Message Recovery Attack}
\label{alg:algo}
\begin{algorithmic}[1]
\State \textbf{Input:} $N,N_1,N_2, \textbf{c}, \{m_0,m_1,\dots,m_{k-1}\}, app\_value$, ${\bf T}_k,q$
\State \textbf{Output:} the message \textbf{m} of the NTRU-HPS system or $null$
\State $\textbf{a} \gets [3^{-1}(c_i-m_i) \mod q \mid i \in [0,\dots,k-1]]$
\State $\textbf{a}_0 \gets [\textbf{H}_{k,i} \mid i \in [0,\dots k-1]]$ \Comment{ see subsection \ref{section:attack}}
\State $B \gets create\_basis(N,N_1,N_2,\textbf{a},\textbf{a}_0)$ \Comment{ see relation (\ref{relation:Bk})}
\State $\hat{B} \gets FLATTER(B)$ 
\State $column \gets \hat{B}^T[N]$ \Comment{get the $(N+1)$-th column of $\hat{B}$}
\For{$i \in \{0,\dots,N-1\}$}
    \State $quotient \gets column[i]/N_1$
    \If{$quotient \neq 0$}
        \State $row \gets \hat{B}[i][0 \ to \ N-1]$ \Comment{first $N$ elements of the $i$-th row}
        \If {$gcd(row) == |quotient|$}
        \State $\textbf{r}' \gets row/quotient$
                \If {$\|\textbf{r}'\| \leq app\_value $ and $\textbf{r}' \in \{-1,0,1\}^N$ and $A{\bf r}'\equiv{\bf T}_k\pmod{q}$} 
                                \State $m'(x) \gets (c(x)- 3 h(x)\star r'(x)) \mod q$
                                \State \Return centerlift($\textbf{m}'$)
            \EndIf
        \EndIf
    \EndIf
\EndFor
\State \Return $null$
\end{algorithmic}
\end{algorithm}

\begin{table}[h]
\centering
\begin{tabular}{|c|c|c|c|c|c|c|c|c|}
\hline
\textbf{($N, q$)} & $N_1$ & $x : N_2=\lceil q^{x}\rceil$ & $k$& $\%$ & runtime & rate \\
\hline\hline
(509, 2048) & 9 & 8 & 425 & $83\%$ & 5m & $100\%$ \\
\hline
(677, 2048) & 1 & 15 & 600 & $89\%$ & 12m & $90\%$  \\
\hline
(821, 4096) & 7 & 21 & 750 & $91\%$ & 17m & $50\%$  \\
\hline
\end{tabular}
\caption{Message recovery attack for  $N=509, \ N=677,$ and $ \ N=821.$ }
\label{table:attack1}
\end{table}

For all of our experiments, we choose $N_1$ to be much smaller than $N_2$, guided also by Theorem~\ref{theorem}. We ensure that the lattice vector \(\mathbf{v} = (\mathbf{r}', \pm N_1, \mathbf{0}_k)\) remains short, thereby improving the chances of recovering \(\mathbf{r}\) efficiently via lattice reduction. The choice of $N_1,\ N_2$ is crucial to the success of our algorithm. In Table \ref{table:attack1} we see the results of the message recovery attack where we present the different values of $N_1,\ N_2$ that we used for each dimension $N$, as well as the number of known coefficients of $m(x)$, notated as $k$. This value of $k$ is the optimal for each dimension $N$, i.e. for smaller values of $k$ we did not get any solution. The percentage column is calculated with $\frac{k}{N} * 100$ and to calculate the success rate, we have run $10$ experiments per row. The runtime column is the total (wall) time of the algorithm in an AMD Ryzen 7 3700X (16 cores) with 16 GB of RAM, Ubuntu machine. The code for the message recovery attack can be found on Github\footnote{\url{https://github.com/poimenidou/knapsack-message-recovery-attack/blob/main/attack.ipynb}}.

We presented an algorithm capable of finding the full message $m$ when approximately 80\% of its entries are known for $N=509$. This percentage can seem large enough but we can seemingly improve this. In the next part, we will discuss an alternative message recovery attack where the percentage of known elements of $m$ will be reduced by incorporating information of the auxiliary vector $r$.

\subsection{Alternative attack} 
In the present attack, we make the assumption that $k_1$ coefficients of the message $m(x)$ and $k_2$ coefficients of the nonce $r(x)$ are known by the attacker, and we show that we can recover the full nonce $r(x)$, and thus the full message. In the previous attack we had $k_2=0.$

The set $Z_0$ holds the positions of the zero-elements, the set $Z_1$ has the positions of the one-elements and $Z_{-1}$ has the positions of the minus-one-elements of the known part of $r(x)$. All these sets have $k_2$ elements in total.

We remove from the matrix $A$ the columns indexed by ${Z}_0$, $Z_1$, and $Z_{-1}$. Instead of solving the same system as in the previous attack $AX={\bf T}_{k_1}$ ($A$ has dimension $k_1\times (N-k_1))$ we will now solve $A_zX={\bf T}_{z}$, where $\textbf{T}_{z} = {\bf T}_{k_1}-{\bf S}$ where,
$${\bf S}=\sum_{i:r_i=1} {\rm{col}}_i(A)-\sum_{i:r_i=-1} {\rm{col}}_i(A).$$ 
Matrix $A_z$ is obtained from \(A\) by removing the columns indexed by ${Z}_0$, $Z_1$, and $Z_{-1}$. Now, the dimension of $A_z$  will be $k_1\times(N-k_2)$ and of $B_z$  will be $(N+k_1-k_2+1)\times (N+k_1-k_2+1)$, where $B_z$ is the matrix:
\begin{equation}
 	B_z =\left[\begin{array}{c|c|c}

			I_{N-k_2} & \textbf{0}_{(N-k_2)\times 1} & N_2 A_z^T  \\
			\hline
			\textbf{0}_{1\times (N-k_2)} & N_1  & -N_2 \textbf{T}^T_{z} \\
            \hline
			\textbf{0}_{k_1\times (N-k_2)} & \textbf{0}_{k_1\times 1}  & N_2qI_{k_1} \\
            
		\end{array} \right] (A_z^T\ \text{is }{(N-k_2)\times k_1}).   
\end{equation}

So, we first construct  the matrix $B_z$ and then, we use the FLATTER algorithm to get the reduced basis $\hat B_z$ and find the solution vector just like we did in the previous attack. 
After the reduction, we reconstruct $r(x)$ (by adding the known $k_2$ elements) and we check if it is valid i.e. its norm satisfies the upper bound given by the ${\it{app\_value}}$. 
\begin{algorithm}[H]
\caption{Alternative Message Recovery Attack}
\label{alg:alternativealgo}
\begin{algorithmic}[1]
\State \textbf{Input:} $N, N_1, N_2,  \textbf{c}, \{m_0,m_1,\dots,m_{k_1-1}\}, \{r_0,r_1,\dots,r_{k_2-1}\}, app\_value, {\bf T}_{k_1}, q$
\State \textbf{Output:} the message \textbf{m} of the NTRU-HPS system or $null$
\State $\textbf{a} \gets [3^{-1}(c_i-m_i) \mod q \mid i \in [0,\dots,k_1-1]]$
\State $\textbf{a}_0 \gets [\textbf{H}_{k_1,i} \mid i \in [0,\dots, k_1-1]]$ \Comment{ see subsection \ref{section:attack}}
\State $B_z \gets create\_basis(N,N_1,N_2,\textbf{a},\textbf{a}_0)$ \Comment{ see relation (\ref{relation:Bk})}
\State $B_z' \gets delete\_columns(B_z,N,N_1,N_2, k_1, \{\textbf{r}_i: i \in [0,\dots,k_2-1]) \}$ 
\State $N' \gets N-k_2$ 
\State $\hat{B_z} \gets FLATTER(B_z')$ 
\State $column \gets (\hat{B}_z^T)[N]$ \Comment{the $(N+1)$-th column of $\hat{B_z}$}
\For{$i \in \{0,\dots,N-1\}$}
    \If{$column[i] == N_1 \ \textbf{or} \ column[i] == -N_1$}
        \State $row \gets \hat{B_z}[i][0 \ to \ N'-1]$ \Comment{first $N'$ elements of the $i$-th row}
        \State $\textbf{r}' \gets set\_known\_positions(row, \{\textbf{r}_i: i \in [0,\dots,k_2-1]) \})$
        \If {$\|\textbf{r}'\| \leq app\_value $ \textbf{and} $\textbf{r}' \in \{-1,0,1\}^N$ \textbf{and} $A{\bf r}'\equiv{\bf T}_{k_1}\pmod{q}$} 
            \State $m'(x) \gets (c(x)- 3 h(x)\star r'(x)) \mod q$
            \State \Return centerlift($\textbf{m}'$)
        \EndIf
    \EndIf
\EndFor
\State \Return $null$
\end{algorithmic}
\end{algorithm}

For all the experiments we choose $N_1$ and $N_2$ as in the previous attack. In Table \ref{table:alteattack} we present the results of the alternative message recovery attack, with the different values of $N_1,\ N_2$ that we used for each dimension $N$, as well as the number of known elements of $m(x)$, notated as $k_1$ and the the number of known elements of $r(x)$, notated as $k_2$. The percentage column is calculated with $\frac{k_1+k_2}{2N} * 100$ and to calculate the success rate we have run $10$ experiments per row. The runtime refers to the total (wall) time our algorithm takes in an AMD Ryzen 7 3700X (16 cores) with 16 GB of RAM, Ubuntu machine. The highlighted rows are the ones with $\min(k_1+k_2)$ and the highest success rate of that specific dimension $N$. The code for the message recovery attack can be found on Github\footnote{\url{https://github.com/poimenidou/knapsack-message-recovery-attack/blob/main/attack_-101.ipynb}}.

\begin{longtable}{|c|c|c|c|c|c|c|c|c|}
\hline
\textbf{($N, q$)} & $N_1$ & $x : N_2=\lceil q^{x}\rceil$ & ${k_1}$ & ${k_2}$ & $k_1+k_2$ & $\textbf{\%}$ & runtime & rate \\
\hline\hline
(509, 2048) & 9 & 8 & 300  & 125 & 425 & $42\%$ & 3m & $10\%$\\
(509, 2048) & 9 & 8 & 250 & 185 & 435 & $43\%$ & 3m & $0\%$\\
\rowcolor{yellow!20}  
(509, 2048) & 9 & 8 & 300  & 135 & 435 & $43\%$ & 3m & $100\%$\\
(509, 2048) & 9 & 8 & 230 & 215 & 445 & $44\%$ & 2m & $50\%$\\
(509, 2048) & 9 & 8 & 250 & 195 & 445 & $44\%$ & 3m & $100\%$\\
(509, 2048) & 9 & 8 & 350 & 100 & 450 & $44\%$ & 3m & $100\%$\\
(509, 2048) & 9 & 8 & 230 & 225 & 455 & $45\%$ & 2m & $100\%$\\
\hline
(677, 2048) & 1 & 15 & 500 & 100 & 600 & $44\%$ & 10m & $10\%$ \\
(677, 2048) & 1 & 15 & 400 & 210 & 610 & $45\%$ & 5m & $0\%$ \\
\rowcolor{yellow!20}  
(677, 2048) & 1 & 15 & 500 & 110 & 610 & $45\%$ & 10m & $90\%$ \\
(677, 2048) & 1 & 15 & 400 & 215 & 615 & $45\%$ & 5m & $70\%$ \\
(677, 2048) & 1 & 15 & 400 & 220 & 620 & $46\%$ & 5m & $10\%$ \\
(677, 2048) & 1 & 15 & 315 & 310 & 625 & $46\%$ & 2m & $40\%$ \\
(677, 2048) & 1 & 15 & 315 & 315 & 630 & $47\%$ & 2m & $60\%$ \\
\hline
\rowcolor{yellow!20}  
(821, 4096) & 7 & 21 & 700 & 90 & 790 & $48\%$ & 14m & $90\%$ \\
(821, 4096) & 7 & 21 & 410 & 400 & 810 & $49\%$ & 2m & $0\%$ \\
(821, 4096) & 7 & 21 & 500 & 310 & 810 & $49\%$ & 3m & $0\%$ \\
(821, 4096) & 7 & 21 & 600 & 210 & 810 & $49\%$ & 5m & $10\%$ \\
(821, 4096) & 7 & 21 & 410 & 410 & 820 & $50\%$ & 2m & $100\%$ \\
(821, 4096) & 7 & 21 & 500 & 320 & 820 & $50\%$ & 3m & $100\%$ \\
(821, 4096) & 7 & 21 & 600 & 220 & 820 & $50\%$ & 5m & $100\%$ \\
\hline
\caption{Alternative attack for  $N=509,$ $ \ N=677,$ and  $ \ N=821$ }
\label{table:alteattack}
\end{longtable}

From Table \ref{table:alteattack} we observe that the most efficient attacks time-wise, require the value $k_1-k_2$ to be as minimal as possible, while preserving $k_1 \geq k_2$. The most efficient attacks are the ones with $k_1=k_2$, however they don't necessarily have the highest success probability when the objective is to minimize the sum $k_1+k_2$. In the end, the best combination of $k_1,\ k_2$ values depend on the information the attacker has, while always keeping in mind the previous observations.

This alternative attack cannot be compared directly to the previous attack since the initial assumptions differ. However, it stands as an improvement to the previous one since it cuts in half the total percentage of the known elements in the assumption about revealed entries ($40\%-45\%$ as opposed to $80\%-90\%$). It also offers greater flexibility since it permits asymmetric leakage between the plaintext $m$ and the nonce $r$, meaning that the attacker may know different numbers of entries of $m$ and $r$ provided that their combined amount meets the required threshold for the given dimension $N$. The alternative attack also decreases by a large percentage the total wall time of the reduction time and therefore of the algorithm.

To mitigate our attack, implementers should ensure that the plaintext $m$ is sampled uniformly at random rather than derived from structured or predictable protocol fields. Side-channel protections are essential, particularly for polynomial multiplication, which should be implemented in constant time.

\section{Conclusion}\label{sec:conclusion}
In the present paper, we propose a practical \emph{SVP}-based message-recovery attack on NTRU-HPS under partial plaintext leakage. 
Assuming that a subset of the coefficients of both the message and the nonce vector are known, we construct a linear system of modular equations whose underlying structure corresponds to a modular knapsack problem. 
We then reduce this problem to an instance of the \emph{Shortest Vector Problem (SVP)} on a suitably defined lattice. 
By applying the \textsc{FLATTER} lattice reduction algorithm, we are able to solve the modular knapsack instance and successfully recover the unknown polynomial $r(x)$, and consequently, the message $m(x)$.

Knowing the $42-48\%$ of the message and the nonce, we can attack the three variants of NTRU-HPS and recover the message in minutes on commodity hardware. For future work, we could address extensions of our method to other lattice-based schemes, such as Kyber and Saber.

\ \\\\
{\bf Acknowledgment}. {The second author was co funded by SECUR-EU. The SECUR-EU project funded under Grant Agreement 101128029 is supported by the European Cybersecurity Competence Centre.}

\begin{appendix}

\section{Proof of the Theorem \ref{theorem}}\label{appendixA}
We start with the following Theorem.
\begin{theorem}(Smith, 1861 \cite{Smith}).
Let $A\in {\mathcal{M}}_{m\times n}({\mathbb{Z}})$ of rank $r.$ Then there is a diagonal integer matrix $D={\rm diag}(\lambda_1,\lambda_2,...,\lambda_r,0,...,0)$ ($m\times n$) with 
$$ \lambda_1|\lambda_2|\cdots |\lambda_{r}$$
and unimodular matrices $U\in GL_m({\mathbb{Z}})$ and $V\in GL_n({\mathbb{Z}})$ such that 
$$A=UDV.$$
The non zero diagonal elements $\lambda_1,...,\lambda_r$ of $D$ are called {\it{elementary divisors}} of $A$ and are defined up to sign. $D$ is the Smith Normal Form (SNF) of $A.$
\end{theorem}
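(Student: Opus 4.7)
The plan is to establish existence constructively by a sequence of elementary integer operations on $A$, and then deduce uniqueness from the invariance of determinantal divisors. Throughout, I would use that every elementary operation over $\mathbb{Z}$ -- swapping two rows or columns, multiplying a row or column by $-1$, or adding an integer multiple of one row/column to another -- is realised by left or right multiplication by a matrix in $GL_m(\mathbb{Z})$ or $GL_n(\mathbb{Z})$; consequently any composition of such operations on $A$ packages itself as $U\, A\, V$ with $U\in GL_m(\mathbb{Z})$, $V\in GL_n(\mathbb{Z})$.

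For existence I would proceed by induction on $\min(m,n)$, with an inner loop based on the minimum absolute value of the nonzero entries. In the inner loop: pick a nonzero entry $a$ of minimum absolute value and move it to position $(1,1)$ via swaps; for every other entry $b$ in the first row or column write $b=qa+r$ with $|r|<|a|$ and replace $b$ by $r$ through an elementary operation. If some $r\neq 0$, the minimum absolute value of nonzero entries strictly decreases, so we iterate; after finitely many steps the first row and first column read $(a,0,\dots,0)$. If some entry $a_{ij}$ with $i,j\ge 2$ is not divisible by $a$, add row $i$ to row $1$ to re-introduce a nonzero off-pivot entry and restart; again the pivot magnitude strictly decreases. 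Termination then yields a block form $\mathrm{diag}(\lambda_1, A')$ in which $\lambda_1$ divides every entry of $A'$. Applying the same process recursively to the $(m-1)\times(n-1)$ matrix $A'$ produces $\lambda_2,\dots,\lambda_r$, and the divisibility $\lambda_1\mid\lambda_2$ persists because every entry ever produced from $A'$ is a $\mathbb{Z}$-linear combination of entries of $A'$ and hence divisible by $\lambda_1$.

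For uniqueness up to sign I would introduce the $k$-th determinantal divisor $d_k(A)$, the gcd of all $k\times k$ minors of $A$, with $d_0(A)=1$. A short application of the Cauchy--Binet formula shows that $d_k$ is invariant under left or right multiplication by an element of $GL_m(\mathbb{Z})$ or $GL_n(\mathbb{Z})$. Evaluating on the diagonal form $D$ gives $d_k(D)=\lambda_1\lambda_2\cdots\lambda_k$ for $1\le k\le r$ and $d_k(D)=0$ for $k>r$. Hence $r=\mathrm{rank}(A)$, and each $\lambda_k=d_k(A)/d_{k-1}(A)$ is determined by $A$ up to sign.

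The main obstacle is proving termination of the two interlocking reduction loops and establishing the divisibility chain cleanly. Termination follows because each non-trivial Euclidean step strictly decreases a positive-integer-valued quantity (the minimum nonzero absolute value), and the outer divisibility-repair step, though it restarts the inner cleanup, again forces a strict decrease of that quantity. The divisibility chain $\lambda_1\mid\lambda_2\mid\cdots\mid\lambda_r$ is cleanest to argue a posteriori through the identity $\lambda_k=d_k(A)/d_{k-1}(A)$ together with the elementary fact $d_{k-1}(A)\mid d_k(A)$ (expand each $k\times k$ minor along a row), so the same determinantal divisor machinery that gives uniqueness also tidies up the last loose end of existence.
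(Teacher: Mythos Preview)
Your argument is correct and is essentially the standard textbook proof of the Smith Normal Form: reduce via elementary row/column operations using a Euclidean descent on the minimum nonzero entry, then recurse on the lower-right block, and establish uniqueness through the invariance of determinantal divisors $d_k(A)$ under unimodular multiplication. The termination argument and the divisibility chain via $\lambda_k=d_k(A)/d_{k-1}(A)$ are handled properly.

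However, note that the paper does \emph{not} supply its own proof of this theorem. It is stated as a classical result attributed to Smith (1861), and the surrounding text only surveys the history of algorithmic computation of SNF (Bradley 1971, von zur Gathen--Sieveking 1976, Storjohann's complexity bound) before using SNF as a black box in the proof of Theorem~\ref{theorem}. So there is no ``paper's proof'' to compare against: you have provided a complete proof where the paper simply cites the literature. Your write-up would be appropriate if the goal were to make the appendix self-contained, but it goes beyond what the paper itself does.
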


The first algorithms for computing the Hermite Normal Form (HNF) and Smith Normal Form (SNF) appeared in 1971 \cite{bradley}. Later, von zur Gathen and Sieveking \cite{von} presented improved algorithms, in 1976 that achieved polynomial-time complexity. For $A\in {\mathcal{M}}_{m\times n}({\mathbb{Z}})$ there are unimodular matrices $P,Q$ such that 
\[
PAQ=\left[\begin{array}{cc}
   {\rm{diag}}(\lambda_1,...,\lambda_r)    &  {\bf 0}_{r\times (n-r)}  \\
   {\bf 0}_{(m-r)\times r} & {\bf 0}_{(m-r)\times (n-r)}
		\end{array}\right],
   \]
		where $r$ is the rank of $A$ and $\lambda_i\in{\mathbb{Z}}_{>0},\ \lambda_i|\lambda_{i+1}.$
		Storjohann \cite[Theorem 12]{storj} provided a deterministic algortithm for computing SNF with complexity,
		$$O(nmr^2\log^{2}(r\|A\|)+r^4\log^{3}(\|A\|))\ \text{bit\ operations},$$ where $\|A\|=\max\{|a_{ij}|\}.$

			If ${\rm{SNF}}(A)=PAQ,$ then from \cite{newman}, it is proved that the last $n-r$ columns of $Q$ is a basis for the integer lattice $AX={\bf 0}.$ 

Below we include the proof of Theorem \ref{theorem}. Let $D=PAQ$ be the SNF of $A$, $k\times N$ $(k<N)$ matrix with ${\rm{rank}}(A)=k$  ($A$ is given in (\ref{equation:MATRIXA})). Also, 
    \[ Q=[{\bf q}_1\mid\cdots\mid {\bf q}_N].\]From properties of SNF, the $N-k$ vectors \({\bf q}_{k+1},\dots,{\bf q}_N$ is a basis of \(\ker_{\mathbb Z}(A)\).
Let also the $k$  vectors 
$\textbf{y}_{j} =q{\bf e}_{j},\ 1\le j\le k,$ where ${\bf {e}}_j$ the $j$th vector of the standard basis of ${\mathbb{R}}^N.$  We define
$V={\rm{span}}({\bf q}_{k+1},...,{\bf q}_{N})$ and $W={\rm{span}}({\bf e}_1,...,{\bf e}_k).$ 
\begin{theorem*}
If $V\cap W=\{ {\bf 0} \},$ then there are $N_1$, $N_2$ such that, the LLL-reduced matrix of $B_k$ is of the form $\hat{B_k}.$ In fact we prove that 
$2^{N+k}N_1^2<c(N,k)<N_2^2$ for some constant $c(N,k).$
\end{theorem*}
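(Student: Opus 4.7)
The plan is to produce, under the hypothesis $V\cap W=\{\mathbf 0\}$, a system of $N+1$ linearly independent vectors of $\mathcal L$ whose last $k$ coordinates all vanish and whose Euclidean norms are controlled by a quantity $M=M(N,k,A,q)$ depending only on the public data, then to invoke Lemma~\ref{Lemma:lll-bound} to force the first $N+1$ rows of the LLL-reduced basis $\hat B_k$ to inherit the same vanishing pattern. Setting $c(N,k):=2^{N+k}M$, the choice $N_2^{\,2}>c(N,k)$ will confine the short basis vectors to the first $N+1$ coordinates, while the mild upper bound $2^{N+k}N_1^2<c(N,k)$ (equivalently $N_1^2<M$) will keep the ``solution'' vector $(\mathbf r,N_1,\mathbf 0_k)$ inside the family of short vectors. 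The structure of the $(N+1)$-st column will then follow from the gcd-preservation property of integer row operations.

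For the construction I would assemble three families of lattice vectors. The $N-k$ vectors $(\mathbf q_j,0,\mathbf 0_k)$ for $j=k+1,\dots,N$ lie in $\mathcal L$ because $A\mathbf q_j=\mathbf 0$. The $k$ vectors $(q\mathbf e_j,0,\mathbf 0_k)$ for $j=1,\dots,k$ lie in $\mathcal L$ because $A(q\mathbf e_j)\equiv\mathbf 0\pmod q$. Finally, the vector $(\mathbf r,N_1,\mathbf 0_k)$ lies in $\mathcal L$ by the Remark preceding the theorem. Over $\mathbb R$ the first two families together span $V+qW=V+W$, which coincides with $\mathbb R^N$ precisely when $V\cap W=\{\mathbf 0\}$; this yields $N$ linearly independent vectors supported on the first $N$ coordinates. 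The third vector is independent of these because its $(N+1)$-st coordinate equals $N_1\neq 0$ while the other $N$ vectors vanish there. Hence we obtain $N+1$ linearly independent lattice vectors with zero final $k$ coordinates, whose maximum squared norm is at most $M:=\max\bigl\{q^2,\|\mathbf q_{k+1}\|^2,\dots,\|\mathbf q_N\|^2,\|\mathbf r\|^2+N_1^2\bigr\}$.

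Applying Lemma~\ref{Lemma:lll-bound} to the rank-$(N+k+1)$ lattice $\mathcal L$ with $t=N+1$, the first $N+1$ rows of the LLL-reduced basis satisfy $\|\hat{\mathbf b}_j\|^2\le 2^{N+k}M=c(N,k)$. Since every lattice point has its last $k$ coordinates as integer multiples of $N_2$, any nonzero entry there contributes at least $N_2^{\,2}$ to the squared norm; choosing $N_2^{\,2}>c(N,k)$ therefore forces those entries to be zero in the first $N+1$ rows, which is exactly the block shape displayed in~(\ref{reduced_matrix}). For the $(N+1)$-st column, LLL performs only integer row swaps and subtractions, so it preserves the gcd of each column; the original $(N+1)$-st column of $B_k$ is $(0,\dots,0,N_1,0,\dots,0)^{T}$ with gcd $N_1$, so after reduction the entries $\varepsilon_0,\dots,\varepsilon_{N+k}$ are all multiples of $N_1$ with overall gcd still $N_1$, forcing at least one of them to equal $\pm N_1$. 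The inequality $2^{N+k}N_1^2<c(N,k)$ reduces to $N_1^2<M$, a harmless constraint since $M\ge q^2$ while $N_1$ is a free small positive integer.

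The main technical obstacle is extracting an explicit bound on the norms of the SNF kernel vectors $\mathbf q_j$, which are determined by the public key $A$; once a concrete $M$ is pinned down, the remaining inequality $2^{N+k}N_1^2<c(N,k)<N_2^{\,2}$ is easily realised by picking $N_1$ small and $N_2$ sufficiently large. The hypothesis $V\cap W=\{\mathbf 0\}$ is doing the essential work: it is precisely what guarantees that the combined system of kernel vectors and $q$-ary vectors has full real rank $N$, without which we could not reach the count $t=N+1$ of short, linearly independent witnesses required by the Lemma.
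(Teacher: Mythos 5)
Your proposal is correct and follows essentially the same route as the paper's proof: the same witnesses (the SNF kernel vectors $\mathbf q_j$, the vectors $q\mathbf e_j$, and $\mathbf r'=(\mathbf r,N_1,\mathbf 0_k)$), the same appeal to Lemma~\ref{Lemma:lll-bound}, and the same $N_2$-divisibility contradiction forcing the zero block, with the choice $N_2^2>c(N,k)$. The only (minor, and arguably cleaner) deviation is that you keep all $k$ vectors $q\mathbf e_j$ and get $N+1$ independent witnesses by using the $(N+1)$-st coordinate of $\mathbf r'$, whereas the paper drops one $q\mathbf e_j$, works with $N$ witnesses, and argues independence of $\mathbf r$ by applying $A$ modulo $q$.
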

\begin{proof}  
 Let the matrix $\hat{B}_k=(\hat{b}_{i,j})$ be the LLL reduced matrix (row-wise) of $B.$ We shall prove that it has the form of (\ref{reduced_matrix}). It is enough to prove that $\hat{b}_{i,j}=0,$ for all $i\in\{0,...,N-1\}$ and $j\in\{N+1,...,N+k\}.$

We set ${\bf q}_j'=({\bf q}_j,{\bf 0}_{k+1})$ and ${\bf y}_j'=({\bf y}_j,{\bf 0}_{k+1})$ which belong to lattice ${\mathcal{L}}=L(B_k).$ Indeed, $A{\bf q}_j={\bf 0}$ and so $A{\bf q}_j={\bf 0}\pmod{q}$ and also $A{\bf y}_j={\bf 0}\pmod{q}.$
 From the hypothesis $V\cap W=\{{\bf 0}\},$
thus,
\[
\{\mathbf q_{k+1},\dots,\mathbf q_N\}
\;\cup\;
\{\mathbf y_1,\dots,\mathbf y_k\}
\]
is a basis of \(\mathbb R^{N}\). Moreover,
\[
\{\mathbf q_{k+1}',\dots,\mathbf q_N'\}
\;\cup\;
\{\mathbf y_1',\dots,\mathbf y_k'\}
\subset \mathcal{L}\]
is linear independent. We set,
$$c(N,k)=2^{N+k}\max\{ \|{\bf r}'\|^2,\| {\bf q}_{k+1}'\|^2,...,\|{\bf q}_{N}'\|^2 ,\|{\bf y}'_{1}\|^2,...,\|{\bf y}'_{k-1}\|^2\},$$ 
where ${\bf r}'=({\bf r},N_1,{\bf 0}_k)\in {\mathcal{L}}$ for some 
${\bf r}$ such that, $A{\bf r}^T={\bf T}_k\pmod{q}$\footnote{Since our system is of NTRU type there is always such a solution ${\bf r}.$}.

The set $\{{\bf r},{\bf q}_{k+1},\dots,{\bf q}_N,q{\bf e}_1,\dots,q{\bf e}_{k-1} \}$ ($N$ vectors) is linear independent in ${\mathbb{R}}^N.$ Indeed, it is enough to prove that 
 \[{\bf r}\not\in {\rm{span}}({\bf q}_{k+1},\dots,{\bf q}_{N},q{\bf e}_1,\dots,q{\bf e}_{k-1}).
 \]
 If ${\bf r}$ is a linear combination of the previous set, then by applying $A$ we get ${\bf T}_k={\bf 0}\pmod{q},$ which is a contradiction since ${\bf T}_k\not= {\bf 0}\pmod{q}.$ Therefore, from Lemma \ref{Lemma:lll-bound} we get
\begin{equation}\label{auxiliary1}
    \| \hat{{\bf b}}_j \|^2\le c(N,k), (0\le j\le N-1).
\end{equation} 
We choose,  
\begin{equation}\label{N1andN2}
    N_2^2>c(N,k).
\end{equation}

Suppose that $\hat{b}_{i_0,j_0}\not =0,$ for some $i_0 \{0,\dots,N-1\}$ and $j_0\in \{N+1,\dots,N+k \}.$ By the structure of $B_k$ we have $(B_k)_{i_0,j}\equiv 0 \pmod{N_2}$ for all
$j\in\{N+1,\dots,N+k+1\}$. Since LLL performs only unimodular integer row
operations, this congruence is preserved; hence
$(\widehat{B}_k)_{i_0,j}\equiv 0 \pmod{N_2}$ for the same indices $j$. 
So, $N_2|\hat{b}_{i_0,j_0},$ but $\hat{b}_{i_0,j_0}\not =0$, thus we get 
$|\hat{b}_{i_0,j_0}|\ge N_2.$ Therefore,
$$\| \hat{{\bf b}}_{i_0} \|^2\geq (\hat{b}_{i_0,j_0})^2\geq N_2^2>c(N,k),$$
which contradicts (\ref{auxiliary1}) if we set $j=i_0.$ 
We conclude that $\hat{b}_{i_0,j_0}=0,$ thus $\hat{b}_{i,j}=0$ for all $i\in\{0,...,N-1\}$ and $j\in\{N+1,...,N+k\}.$

Finally, $c(N,k)\ge 2^{N+k}\|{\bf r}' \|^2=2^{N+k}(N_1^2+\|{\bf r} \|^2)>2^{N+k}N_1^2.$ This shows that $2^{N+k}N_1^2<c(N,k)<N_2^2.$

\end{proof}
We have assumed that \(A\in {\mathbb{Z}}^{k\times N}\) has full row rank. For a random (full–row–rank) integer matrix \(A\), the condition
$
V\cap W=\{\mathbf 0\}
$
holds generically (i.e., for almost every choice of entries). Otherwise,
the homogeneous system \(AX=0\) over \(\mathbb R\) would have a nonzero solution of the form
\((a_1,\ldots,a_k,0,\ldots,0)\).
\end{appendix}


\begin{thebibliography}{9}

\bibitem{lenstra-system} K. Aardal, Cor A. J. Hurkens and A. K. Lenstra, Solving a System of Linear Diophantine Equations with Lower and Upper Bounds on the Variables, Integer Programming and Combinatorial Optimization. IPCO 1998, LNCS vol. {\bf 1412}, Springer 2000.

\bibitem{dsa2} M. Adamoudis, K. A. Draziotis and D. Poulakis, Attacking (EC)DSA scheme with ephemeral keys sharing specific bits. Theoretical Computer Science, Vol. {\bf 1001}, June 2024, Elsevier, 2024. 

\bibitem{marios} M. Adamoudis, K. A. Draziotis and D. Poulakis, Enhancing a DSA attack, CAI 2019, p. 13-25.  LNCS {\bf 11545}, Springer 2019.

\bibitem{adam_draz} M.~Adamoudis and K.~A.~Draziotis, Message recovery attack on NTRU using a lattice independent from the public key, Advances in Mathematics of Communications Volume 19(1), 2025, doi:\url{http://dx.doi.org/10.3934/amc.2023040}, arXiv:\url{https://arxiv.org/abs/2203.09620}

\bibitem{Albrecht} M.~Albrecht, S.~Bai, and L.~ Ducas, A subfield lattice attack on overstretched NTRU assumptions. CRYPTO 2016. LNCS \textbf{9814}, Springer 2016.

\bibitem{ntru-prime-submission-3} D. J. Bernstein, B. B. Brumley,  M. S. Chen, C. Chuengsatiansup, T. Lange, A. Marotzke,  B. Y. Peng, N. Tuveri, C. van Vredendaal, B. Y. Yang, NTRU - Prime Algorithm Specifications And Supporting Documentation, 2020.

\bibitem{gerald} G. Bourgeois and J. C. Faug\`ere, Algebraic attack on NTRU using Witt vectors and Gr\"obner bases, Journal of Mathematical Cryptology \textbf{3(3)} p. 205--214, 2009.

\bibitem{bradley}  G.~H.~Bradley, Algorithms for Hermite and Smith normal matrices and linear Diophantine equations, Math. Comput. {\bf 25}, American Mathematical Society (1971) p. 897--907.

\bibitem{CopSha97}  D. Coppersmith  and  A. Shamir,  Lattice  Attacks  on  NTRU.  In  Proc.  Eurocrypt  1997, LNCS \textbf{1223}, Springer, 2997.

\bibitem{nist2-ntru} C. Chen, O. Danba, J.  Hoffstein, A. H\"ulsing, J. Rijneveld, J. M. Schanck, T. Saito, P. Schwabe, W. Whyte, K. Xagawa, T. Yamakawa, and Z. Zhang, NTRU - Algorithm Specifications And Supporting Documentation, 2020.

\bibitem{Cheon} J. H. Cheon, J. Jeong, and C. Lee, An algorithm for NTRU problems and cryptanalysis of the GGH multilinear map without an encoding of zero. Cryptology ePrint Archive, Report \href{https://eprint.iacr.org/2016/139}{2016/139}, 2016.  

\bibitem{papadopoulou} K. A. Draziotis and A. Papadopoulou, Improved attacks on knapsack problem with their variants and a knapsack type ID-scheme, Advances in Mathematics of Communications, Volume 12, Issue 1, 2018, doi:\url{https://doi.org/10.3934/amc.2018026}.

\bibitem{ISIS23}  L. Ducas, T. Espitau and E. W. Postlethwaite, Finding Short Integer Solutions When the Modulus Is Small, Crypto 2023, LNCS {\bf 14083}, Springer, \url{https://eprint.iacr.org/2023/1125.pdf}

\bibitem{Gama-Nguyen} N. Gama and P. Q. Nguyen, New Chosen-Ciphertext Attacks on NTRU. Public Key Cryptography, PKC 2007, LNCS \textbf{4450}, Springer 2007.

\bibitem{np-complete}  M. R. Garey and D. S. Johnson, Computers and Intractability: A Guide to the Theory of NP-Completeness, Freeman, 1979.

\bibitem{gentry} C. Gentry, Key recovery and message attacks on NTRU-composite, EUROCRYPT 2001, LNCS \textbf{2045}, Springer 2001.

\bibitem{von} J.~von~zur~Gathen and M.~Sieveking, Weitere zum Erfiillungsproblem polynomial aquivalente kombinatorische Aufgaben, Komplexit\"at von Entscheidungsproblemen, pp:49-71, 1976.

\bibitem{hoffstein} J. Hoffstein, J. Pipher, and J. H. Silverman, NTRU: A ring-based public key cryptosystem, in Proceedings of ANTS '98 (ed. J. Buhler), LNCS {\bf 1423}, p. 267--288, 1998.

\bibitem{hybrid attack} N. Howgrave-Graham, A Hybrid Lattice-Reduction and Meet-in-the-Middle Attack Against NTRU. CRYPTO 2007, LNCS \textbf{4622}, Springer 2007. 

\bibitem{Howgrave} N. A. Howgrave-Graham and N. P. Smart, Lattice Attacks on Digital Signature Schemes,  {Des. Codes Cryptogr.} {\bf 23}, p. 283--290, 2001.

\bibitem{Odlyzko} N. Howgrave-Graham, J. H. Silverman, and W. Whyte, Meet-in-the-middle Attack on an NTRU private key, Technical report, NTRU Cryptosystems, July 2006. Report  04, available at \url{http://www.ntru.com}.

\bibitem{sidechannel} W.L. Huang, J.P. Chen and B.Y. Yang. Power Analysis on NTRU Prime. IACR Transactions on Cryptographic Hardware and Embedded Systems, Vol. {\bf 2020(1)}, DOI: \url{https://doi.org/10.13154/tches.v2020.i1.123-151}

\bibitem{sidechannel2} E. Karabulut, E. Alkim and A. Aysu, Single-Trace Side-Channel Attacks on $\omega$-Small Polynomial Sampling: With Applications to NTRU, NTRU Prime, and CRYSTALS-DILITHIUM, 2021 IEEE International Symposium on Hardware Oriented Security and Trust (HOST), Tysons Corner, VA, USA, 2021, pp. 35--45, doi: \url{https://doi.org/10.1109/HOST49136.2021.9702284}.

\bibitem{Kirchner} P. Kirchner and P. A. Fouque, Revisiting Lattice Attacks on Overstretched NTRU Parameters. Eurocrypt 2017, LNCS {\textbf{10210}}, Springer 2017.

\bibitem{improvedlll} P. Kirchner and T. Espitau and P. A. Fouque, An Improved LLL Algorithm, Advances in Cryptology -- ASIACRYPT 2019, Springer, 2019

\bibitem{may_recent} E. Kirshanova, A. May, and J. Nowakowski, New NTRU Records with Improved Lattice Bases. PQCrypto 2023, LNCS {\bf 14154},  2023, doi : \url{https://doi.org/10.1007/978-3-031-40003-2_7}

\bibitem{branch-and-bound} R. M. Kolpakov and M. A. Posypkin, Upper and lower bounds for the complexity of the branch and bound method for the knapsack problem, Discrete Mathematics and Applications, 2010. 

\bibitem{segment} H. Koy and C. P. Schnorr, Segment Lattice Reduction, Proceedings of the Workshop on the Theory and Application of Cryptographic Techniques, EUROCRYPT '89, Springer, 1990

\bibitem{lll} A. K. Lenstra, H. W. Lenstra, L. Lov\'asz, Factoring polynomials with rational coefficients,  Math. Ann. {\bf 261}, 515--534 (1982). \url{https://doi.org/10.1007/BF01457454}

\bibitem{may-hints} A. May and J. Nowakowski, Too Many Hints -- When LLL Breaks LWE, 2024, \url{https://eprint.iacr.org/2023/777.pdf}

\bibitem{floatinglll} A. Neumaier and D. Stehl{\'{e}}, Floating-Point {LLL} Revisited, Advances in Cryptology -- EUROCRYPT 2016, Springer, 2016

\bibitem{newman}  M.~Newman, The Smith normal form. Proceedings of the Fifth Conference of the International Linear Algebra Society, Linear Algebra Appl. {\bf 254}, p. 367--381, Elsevier 1997.

\bibitem{Nguyen - Boosting the hybrid}  P. Q. Nguyen, Boosting the Hybrid Attack on NTRU: Torus LSH, Permuted HNF and Boxed Sphere, Third PQC Standardization Conference, 2021.

\bibitem{ntru_vfk_attack} E. Poimenidou, M. Adamoudis, K. A. Draziotis, and K. Tsichlas, Message Recovery Attack in NTRU through VFK Lattices, preprint, \url{https://doi.org/10.48550/arXiv.2311.17022}

\bibitem{nutmic} E. Poimenidou, M. Adamoudis, K. A. Draziotis, Towards message recovery in NTRU Encryption with auxiliary data, NuTMiC 2024, LNCS {\bf 14966}, Springer. 

\bibitem{flatter} K. Ryan and N. Heninger, Fast Practical Lattice Reduction Through Iterated Compression, Advances in Cryptology -- CRYPTO 2023, LNCS {\bf 14083}, Springer.   
\bibitem{Smith} H.~J.~S.~Smith, On systems of linear indeterminate equations and congruences. Phil. Trans. Roy. Soc. London {\bf 151}, p. 293--326, 1861.

\bibitem{storj} A.~Storjohann. Computing hermite and smith normal forms of triangular integer matrices. Linear Algebra and its Applications {\bf 282} p.25--45, Elsevier, 1998.

\bibitem{shor} P. W. Shor, Algorithms for quantum computation: Discrete logarithms and factoring. In 35th Annual Symposium on Foundations of Computer Science, Santa Fe, New Mexico, USA, 20-22 November 1994, p. 124--134. IEEE Computer Society, 1994.


\bibitem{witt} J. H. Silverman, N. P. Smart, and F. Vercauteren, An algebraic approach to NTRU ($q = 2n$) via Witt vectors and overdetermined systems of non linear equations. Security in Communication Networks -- SCN 2004, LNCS {\bf 3352}, p. 278--298. Springer, 2005.

\end{thebibliography}
\end{document}